\newtheorem{theorem}{Theorem}[section]
\newtheorem{lemma}[theorem]{Lemma}
\newtheorem{proposition}[theorem]{Proposition}
\newtheorem{corollary}[theorem]{Corollary}
\theoremstyle{remark}
\newcommand{\set}{\mathbb}
\newcommand{\R}{\mathbb R}
\newcommand{\B}{\mc L}
\newcommand{\dl}{\nabla}
\newcommand{\les}{\lesssim}
\newcommand{\mc}{\mathcal}
\newcommand{\be}{\begin{equation}}
\newcommand{\ee}{\end{equation}}
\newcommand{\bee}{\begin{align}}
\newcommand{\eee}{\end{align}}
\newcommand{\ba}{\begin{array}}
\newcommand{\ds}{\displaystyle}
\newcommand{\ea}{\end{array}}
\newcommand{\bpm}{\begin{pmatrix}}
\newcommand{\epm}{\end{pmatrix}}
\newcommand{\lb}{\label}
\DeclareMathOperator{\sgn}{sgn}
\DeclareMathOperator{\Imim}{Im}
\newcommand{\ov}{\overline}
\newcommand{\dd}{{\,}{d}}
\renewcommand{\Im}{\Imim}
\subjclass[2000]{35Q41, 60J65, 35J10, 35Q40}
\title{The Schr\"{o}dinger Equation with Potential in Random Motion}
\author{Marius Beceanu}
\address{110 Frelinghuysen Rd., Rutgers Math.\ Dept., Piscataway, NJ 08854, USA}
\email{mbeceanu@math.rutgers.edu}
\author{Avy Soffer}
\address{110 Frelinghuysen Rd., Rutgers Math.\ Dept., Piscataway, NJ 08854, USA}
\email{soffer@math.rutgers.edu}
\begin{document}
\maketitle
\numberwithin{equation}{section}
\begin{abstract}
We study Schr\"{o}dinger's equation with a potential moving along a Brownian motion path. We prove a RAGE-type theorem and Strichartz estimates for the solution on average.
\end{abstract}

\section{Introduction}
\subsection{Overview}

Consider the linear Schr\"{o}dinger equation in $\set R^3$ with a potential subject to Brownian motion
\be\lb{1.1}
i \partial_t Z + H(t) Z = 0,\ Z(0)= Z_0 \text{ given},
\ee
where
\be\lb{eq_3.2}
H(t) = H_0 + V(t) = -\Delta + V(x- \alpha \gamma(t)).
\ee
$\gamma(t)$ is a typical path of the standard Brownian motion $B_t$ and $\alpha$ is a parameter for its magnitude.

The potential $V:=V(x-\gamma(t))$ moves along a random path $\gamma(t)$ of Brownian motion $B_t$, instead of a deterministic $\dot H^{1/2} \cap C$ path as in \cite{becsof}.

Assume that $H=-\Delta+V$ has bound states. If the potential $V(x-\alpha \gamma(t))$ moves sufficiently slowly along a smooth path, by physical considerations bound states are preserved under (\ref{1.1}). Furthermore, mass and energy are conserved quantities for Schr\"{o}dinger's equation: when $\gamma(t) \equiv 0$,
$$\begin{aligned}
M(t) &:= \int_{\R^3} |Z(x, t)|^2 \dd x,\\
E(t) &:= \int_{\R^3} |\dl Z(x, t)|^2 + V(x-\gamma(t)) |Z(x, t)|^2 \dd x
\end{aligned}$$
are constant in $t$. More generally, $M(t)$ remains constant for $\gamma(t) \in B^{1/2}_{2, \infty}$ and we show in \cite{becsof} that if $E[Z(0)]$ is finite, then $E[Z(t)]$ remains bounded for all time.

Conversely, in this paper we prove that when $\gamma(t)$ is a typical Brownian motion path, the solution completely disperses. A typical path of Brownian motion is in $B^{1/2}_{2, \infty}$ locally in $t$, which differs only logarithmically from $\dot H^{1/2}\cap C$ considered in \cite{becsof}. This logarithmic difference produces qualitatively different results.

\subsection{Sample paths} For typical paths of Brownian motion, we obtain uniform control of the solution on finite time intervals in Section \ref{sect_2}. Thus we show that ground states of the equation disappear in finite time.

Without loss of generality, we consider the time interval $[0, 1]$. Let $V = V_1 V_2$, where $V_1 = |V|^{1/2}$ and $V_2 = |V|^{1/2} \sgn V$. 

Given $\gamma:[0, 1] \to \set R^3$, define the operator $S(\gamma)$ on $L^2_{[0, 1]} L^2_x$ by
$$
S(\gamma)(t, s) := \chi_{[0, 1]}(t) \int_0^t V_2 e^{i(t-s)\Delta} e^{(\gamma(t)-\gamma(s)) \dl} V_1 F(s) \dd s.
$$

\begin{proposition}\lb{long_int}
Consider $V \in L^{3/2, \infty}_0$. Then almost every sample path $\gamma$ of $B_t$, $t \in [0, \infty)$, has the property that, for any initial data $Z_0$ and any $\epsilon>0$, the corresponding solution $Z$ of (\ref{1.1}) fulfills
$$
\mu(\{t \mid \|Z(t)\|_{L^6_x} < \epsilon\}) = + \infty.
$$
\end{proposition}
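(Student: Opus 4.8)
The plan is to pass from the finite-time control of Section~\ref{sect_2} to the infinite-time statement by tiling $[0,\infty)$ into the unit intervals $I_n=[n,n+1]$ and exploiting the independence of the Brownian increments together with conservation of mass. First I would record that, since $\|Z(t)\|_{L^2}=\|Z_0\|_{L^2}$ is constant, the solution restricted to $I_n$ is again a solution of (\ref{1.1}) with initial datum $Z(n)$ of the same $L^2$ norm, driven by the recentered path $s\mapsto\gamma(n+s)-\gamma(n)$, $s\in[0,1]$. Because $S(\gamma)$ involves $\gamma$ only through the differences $\gamma(t)-\gamma(s)$, this recentering is exact; and by the Markov property the recentered path is, conditionally on $\mc F_n=\sigma(\gamma|_{[0,n]})$, an independent copy of a standard Brownian path on $[0,1]$. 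Hence the per-interval behaviour is governed by an independent copy of the operator $S(\gamma)$ acting on the entering state $Z(n)$.

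Next I would invoke the finite-time estimate of Section~\ref{sect_2} on each block: for almost every increment the local Strichartz bound
$$
\int_{n}^{n+1}\|Z(t)\|_{L^6_x}^2\dd t \les C_n^2\,\|Z(n)\|_{L^2}^2 = C_n^2\,\|Z_0\|_{L^2}^2
$$
holds, where the constants $C_n$ are, by the previous paragraph, identically distributed and conditionally independent. Define the events
$$
A_n=\Big\{\,\mu\big(\{t\in I_n:\ \|Z(t)\|_{L^6_x}<\epsilon\}\big)\ge \tfrac12\,\Big\}.
$$
Each occurrence of $A_n$ contributes at least $1/2$ to the measure in the statement, so it suffices to prove that almost surely infinitely many $A_n$ occur. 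By Chebyshev's inequality applied inside $I_n$, the bound $\int_{I_n}\|Z\|_{L^6_x}^2\les C_n^2\|Z_0\|_{L^2}^2\le\delta$ with $\delta=\epsilon^2/2$ forces $\mu(\{t\in I_n:\|Z(t)\|_{L^6_x}\ge\epsilon\})\le 1/2$, and hence forces $A_n$; thus $A_n$ holds whenever the block constant satisfies $C_n^2\|Z_0\|_{L^2}^2\le\delta$.

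The decisive step, and the place where the roughness of the Brownian path enters, is a uniform lower bound
$$
\Phi(\psi):=\mathbb P\big(A_n \mid Z(n)=\psi\big)\ \ge\ p\ >\ 0
\qquad\text{for every }\psi\text{ with }\|\psi\|_{L^2}=\|Z_0\|_{L^2}.
$$
This is exactly the mechanism behind the slogan that ground states disappear: for the deterministic $\dot H^{1/2}\cap C$ paths of \cite{becsof} a bound state can be carried along adiabatically and $\Phi$ may vanish, whereas a typical Brownian increment is nowhere differentiable and, on a set of increments of positive probability, shakes the potential violently enough (and, by transience in $\R^3$, far enough) that the portion of $\psi$ previously trapped by $V$ is released and disperses, so that the free Strichartz bound dominates on a definite fraction of $I_n$. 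Granting such a $p>0$ uniformly in $\psi$, the conditional Borel--Cantelli lemma gives $\sum_n \mathbb P(A_n\mid\mc F_n)\ge\sum_n p=\infty$ almost surely, whence infinitely many $A_n$ occur and the good set has infinite measure. I expect the uniform positivity of $\Phi$ to be the main obstacle: one must rule out \emph{any} entering profile $\psi$ that the random motion fails to disperse with probability bounded below, and this is precisely where the logarithmic gap between $B^{1/2}_{2,\infty}$ and $\dot H^{1/2}\cap C$ must be converted into a quantitative, profile-independent dispersion estimate over a single unit interval.
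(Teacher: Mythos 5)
There is a genuine gap, and you have correctly located it yourself: the uniform lower bound $\Phi(\psi)\ge p>0$ over all entering profiles $\psi$ of fixed $L^2$ norm is not merely the main obstacle, it is false for unit intervals. Take $\psi$ to be a normalized bump at unit scale (or a wave packet arranged to focus inside $I_n$): the free evolution alone satisfies $\|e^{i(t-n)\Delta}\psi\|_{L^6_x}\gtrsim \langle t-n\rangle^{-1}\ge\epsilon$ throughout all of $I_n$ when $\epsilon$ is small, and since on the good event the perturbative part is small, $Z$ stays close to the free flow and the set $\{t\in I_n:\|Z(t)\|_{L^6_x}<\epsilon\}$ is empty no matter what the Brownian increment does. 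Equivalently, your Chebyshev step requires $C_n^2\|Z_0\|_2^2\le\epsilon^2/2$, i.e.\ the \emph{full} local Strichartz constant (free part included) must be tiny; but that constant is bounded below by the free endpoint Strichartz constant on a unit interval, uniformly over focusing data. So no choice of $p>0$ works, and the conditional Borel--Cantelli argument has nothing to sum.

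The paper's proof avoids this by changing the length scale rather than the probabilistic bookkeeping. Proposition \ref{prop_3.2} (which is the correct replacement for your $\Phi$: smallness of the restricted operator norm $\|\chi_{[t_0,t_1]}T(B_t)\chi_{[t_0,t_1]}\|$ is an event depending only on the path increment, hence independent across blocks and automatically uniform over all entering data) holds with positive probability on an interval of \emph{arbitrary} length $L$, hence occurs almost surely on some block by the Markov property. On such a block the Duhamel series converges and yields $\|Z\|_{L^2_{[t_0,t_1]}L^6_x}\le C\|Z(t_0)\|_2=C\|Z_0\|_2$ with $C$ \emph{independent of $L$}; Chebyshev then gives a good set of measure at least $L-C^2\|Z_0\|_2^2/\epsilon^2$, which tends to infinity with $L$. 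No per-interval ``fraction at least $1/2$'' is ever needed, and the focusing-profile counterexample is harmless because its bad set has measure bounded by the fixed quantity $C^2\|Z_0\|_2^2/\epsilon^2$ while the interval grows. To repair your argument you would have to replace the unit tiles $I_n$ by tiles of growing length $L_k\to\infty$ and drop the requirement that each good tile contribute a definite fraction of its length.
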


Thus, almost every random path leads to the dispersion of all initial data.

Keeping better track of computations, we obtain instead a quantitative result:
\begin{proposition}\lb{finite_time} Assume $V = V_{3/2} + V_{\infty}$, where $V_{3/2} \in L^{3/2, \infty}_0$,  $V_{\infty} \in L^{\infty}_0$, meaning $V_{\infty} \in L^{\infty}$ and $\lim_{x \to \infty} |V_{\infty}(x)| = 0$.

Then, for any $\epsilon>0$
$$
\lim_{\alpha \to \infty} \set E \|S(\alpha B_t)\|_{L^2_{t, x} \to L^2_{t, x}} = 0.
$$
Additionally, fix a time $T>0$. Then
$$
\lim_{\alpha \to \infty} \set E \Big\{\sup_{\|Z_0\|_2 = 1} \|Z-e^{it\Delta} Z_0\|_{L^2_{[0, T]} L^{6, 2}_x}\Big\} =0.
$$
\end{proposition}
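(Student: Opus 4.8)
The plan is to reduce everything to the single operator $S$ and to prove $\lim_{\alpha\to\infty}\set E\,\|S(\alpha B_t)\|_{L^2_{t,x}\to L^2_{t,x}}=0$, after which the Strichartz statement will follow by a Duhamel/Neumann-series argument. First I would record why $S$ is the right object. Writing Duhamel's formula for (\ref{1.1}) and splitting $V(\cdot-\alpha\gamma(s))=V_2(\cdot-\alpha\gamma(s))V_1(\cdot-\alpha\gamma(s))$, the physical interaction operator has kernel $V_2(\cdot-\alpha\gamma(t))e^{i(t-s)\Delta}V_1(\cdot-\alpha\gamma(s))$. Conjugating by the unitary, pointwise-in-time translations $e^{\mp\alpha\gamma\dl}$ and using that translations commute with $e^{i\tau\Delta}$ turns this kernel into exactly $V_2 e^{i(t-s)\Delta}e^{\alpha(\gamma(t)-\gamma(s))\dl}V_1$; hence the physical operator and $S(\alpha\gamma)$ have the same $L^2_{t,x}$ norm.

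Second, I would isolate the delicacy. Interpolating $\|e^{i\tau\Delta}\|_{L^1\to L^\infty}\les|\tau|^{-3/2}$ with $L^2$-unitarity and using $V_1,V_2\in L^3$ gives the scaling-critical block bound $\|V_2 e^{i\tau\Delta}V_1\|_{L^2\to L^2}\les\|V\|_{L^{3/2,\infty}}|\tau|^{-1}$. Crucially, the translation is harmless for a single block: completing the square in the free kernel shows that $V_2 e^{i\tau\Delta}e^{\beta\dl}V_1$ is a unimodular-modulation conjugate of $V_2 e^{i\tau\Delta}V_1$, so its norm is \emph{independent of} $\beta$. Consequently the naive Schur test applied to $S$ itself produces $\int_0^t|t-s|^{-1}\dd s$, which diverges logarithmically. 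This is precisely the borderline flagged in the introduction, and it shows the smallness cannot come from any pointwise-in-time estimate.

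Third — and this is where I expect the real work — I would pass to $S^*S$ to exploit oscillation. After stripping the unitary modulations, the composed kernel reduces to $V_1 e^{-i(t-s')\Delta}\big(e^{ix\cdot\xi}\,|V|\big)e^{i(t-s)\Delta}V_1$, where the surviving interior modulation has frequency $\xi=\xi(t,s,s')$ proportional to $\alpha$ (built from $B_t-B_s$ and $B_t-B_{s'}$), while a residual scalar phase $\propto\alpha^2$ appears inside the $t$-integral defining the $S^*S$-kernel $\mc L(s',s)$. This large interior frequency together with the rapidly oscillating $t$-phase both restores integrability of the $t$-integral and furnishes a factor that is small for large $\alpha$: morally, a large momentum transfer cannot be supplied by the smooth, decaying potential $|V|$, and a fast phase in $t$ integrates to something small. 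I would quantify this by combining the dispersive bounds with a (non-)stationary phase estimate in $t$, or with a decay-in-$|\xi|$ bound using the regularity of a smoothed potential, obtaining $\|\mc L(s',s)\|$ bounded by a fixed integrable kernel times a decay factor. I would then take expectations via $\set E\|S\|\le(\set E\|S\|^2)^{1/2}=(\set E\|S^*S\|)^{1/2}$, arranging the Schur estimate so that Tonelli lets the Brownian expectation act pointwise on the Gaussian increments inside the time integrals; the Gaussian average of the decay factor tends to $0$ as $\alpha\to\infty$. Throughout I would first treat compactly supported bounded $V$ and then remove this restriction, approximating $V$ with uniformly small operator-norm error using the $L^{3/2,\infty}_0+L^\infty_0$ hypothesis.

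For the second statement I would reduce to the sandwiched unknown $g=V_1(\cdot-\alpha\gamma)Z$, which solves $g=V_1(\cdot-\alpha\gamma)e^{it\Delta}Z_0+iS'g$ with $\|S'\|=\|S\|$; on the large-probability event $\|S\|<1$ the Neumann series for $(I-iS')^{-1}$ converges, and recovering $Z-e^{it\Delta}Z_0$ by one more Duhamel step together with the dual endpoint Strichartz estimate $\|\int_0^t e^{i(t-s)\Delta}F(s)\dd s\|_{L^2_{[0,T]}L^{6,2}_x}\les\|F\|_{L^2_tL^{6/5,2}_x}$ controls the correction by $\|g\|_{L^2_{t,x}}$; taking expectations then yields the stated limit. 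The main obstacle is the third step: because the problem is scaling-critical there is no room to lose, so the smallness must be extracted entirely from the oscillation in the $TT^*$ kernel, and this oscillatory gain must both survive the Brownian average and commute with the operator norm — which is exactly why the reduction of $\|S^*S\|$ to Tonelli-friendly time integrals is indispensable.
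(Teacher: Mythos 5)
Your diagnosis in the first two paragraphs is correct and matches the paper's own framing: the single block $V_2 e^{i\tau\Delta}e^{\beta\dl}V_1$ admits only the translation-independent bound $\les \|V\|_{L^{3/2,\infty}}|\tau|^{-1}$, the naive Schur test diverges logarithmically, and (as the paper notes) $\|S(\alpha\gamma)\|$ does \emph{not} tend to zero for piecewise linear $\gamma$. But your third step --- the heart of the argument --- has a genuine gap. The smallness mechanism you propose is the Gaussian average of the oscillatory factors $e^{i\alpha(B_t-B_s)\cdot\xi}$ appearing in the $S^*S$ kernel. That average (which produces $e^{-\alpha^2|t-s||\xi|^2/2}$ and, after integration, the $\alpha^2|\xi_1-\xi_2|^2$ denominators) is exactly the engine of the paper's proof of Theorem \ref{thm21}, where the initial data is \emph{fixed} and $\set E$ can be interchanged with the time integrals. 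Here the supremum defining the operator norm sits \emph{inside} the expectation: $\set E\sup_F \geq \sup_F \set E$, so you cannot let the Brownian expectation ``act pointwise on the Gaussian increments'' before the Schur/operator-norm step. Any pathwise bound $\|\mc L(s',s)\|\leq(\text{integrable kernel})\times(\text{decay factor})$ with a deterministic decay factor is impossible, since for a linear path there is no decay at all; and your ``residual scalar phase $\propto\alpha^2$'' does not exist pathwise --- it only appears after averaging. The smallness must therefore come from a pathwise geometric property of the Brownian path, not from cancellation under $\set E$.

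The paper's route is entirely different and worth contrasting. Pathwise, it uses the pseudoconformal transformation and a mass-current/Galilean-frame estimate to show that on each subinterval $I_\ell$ of a logarithmic partition of $[t_0+\delta,t_0+1]$ the $L^2$ mass of $e^{i(t-t_0)\Delta}V_1F(t_0)$ that can sit inside a slanted cylinder of radius $R$ is either below threshold or persists throughout $I_\ell$, whence at most $m\les n/|\ln\delta|$ ``heavy'' cylinders exist. This yields the deterministic bound
$\|S(\gamma)\|^2_{L^2_{t,x}\to L^2_{t,x}}\les \delta+\tfrac{|\ln\delta|}{n}+K(\gamma,1-\delta^{1/n},R)$,
where $K(\gamma,\epsilon,r)$ measures the maximal time $\gamma$ spends in any cylinder of radius $r$ over an interval of length $\epsilon$. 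Only then does probability enter: the energy estimate for the Brownian local time $\mu$, $\set E\int |x-y|^{-\beta}\dd\mu(x)\dd\mu(y)<\infty$ for $\beta<3/2$, gives $\mu(B(x,\epsilon))\les\epsilon^\beta$, so a unit cylinder of radius $\epsilon=1/\alpha$ has $\mu$-measure $\les\epsilon^{\beta-1}\to0$, and $\set E K(\alpha B_t,\delta,R)\to0$. This is the step your proposal is missing: a quantitative expression of the fact that the graph of Brownian motion has Hausdorff dimension greater than one, applied \emph{uniformly over all cylinders at once} and hence compatible with the supremum inside the expectation. Your final paragraph (Neumann series on the event $\|S\|<1$ plus the dual Keel--Tao endpoint estimate to convert $\set E\|S\|\to0$ into the Strichartz statement) is fine and consistent with how the paper uses Proposition \ref{prop_3.2}.
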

An analogous result holds in all spatial dimensions one and higher.

If $\gamma$ is a piecewise linear curve, then $\|S(\alpha \gamma)\|$ will never go to zero, regardless of the size of $\gamma$.

Thus, a generic path of Brownian motion is unlike a piecewise linear curve. The difference is that the graph of Brownian motion has Haussdorff dimension strictly greater than $1$. As such, the result of Proposition \ref{finite_time} will hold for other random processes, such as fractional Brownian motions with different Hurst coefficients, whose Haussdorff dimension is greater than $1$.

As a consequence of Proposition \ref{finite_time} we obtain a result analogous to the RAGE theorem.

The RAGE theorem of Ruelle \cite{ruelle}, Amrein--Georgescu \cite{amge}, and Enss \cite{ens} is a classical result that rigorously establishes the dichotomy between dispersive states, whose time average with respect to any relatively compact operator $C$ is zero
$$
\lim_{T \to \infty}\frac 1 T \int_0^T \|C \exp(-itA ) \phi\|^2 dt = 0.
$$
and bound states, whose time average can be nonzero in the above sense.

Proposition \ref{finite_time} implies the uniform decay of solutions on an infinite time interval, in a way similar to the RAGE theorem:
\begin{proposition}\lb{cor3.5}
Let $\gamma$ be a typical path of $B_t$. As $\alpha \to \infty$, for a solution $Z$ to equation (\ref{1.1}) driven by $\alpha \gamma(t)$ and of initial data $Z_0$
$$
\lim_{\alpha \to \infty} \set E \Big\{\sup_{\|Z_0\|_2 = 1} \frac 1 T {\|Z\|_{L^2_{[0, T]} L^{6, 2}_x}^2}\Big\} = 0.
$$
\end{proposition}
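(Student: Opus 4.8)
The plan is to transfer the estimate from the full evolution $Z$ to the free evolution $e^{it\Delta}Z_0$ by means of Proposition~\ref{finite_time}, and then to read off the decay from the global endpoint Strichartz estimate together with the time average $1/T$ — the very averaging that appears in the RAGE theorem. Writing $R:=Z-e^{it\Delta}Z_0$ and using $(a+b)^2\le 2a^2+2b^2$ pathwise,
\be\lb{cor_split}
\frac1T\|Z\|_{L^2_{[0,T]}L^{6,2}_x}^2\le \frac2T\|e^{it\Delta}Z_0\|_{L^2_{[0,T]}L^{6,2}_x}^2+\frac2T\|R\|_{L^2_{[0,T]}L^{6,2}_x}^2 .
\ee
The endpoint Keel--Tao bound $\|e^{it\Delta}Z_0\|_{L^2_{\R}L^{6,2}_x}\le C_S\|Z_0\|_2$ is global in time and insensitive to the path, so the first term is $\le 2C_S^2/T$ uniformly in $\alpha$, in the path, and over $\|Z_0\|_2=1$. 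This deterministic term is exactly the dispersive floor; as the averaging window grows it is the RAGE time average of the free flow.

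For the given $T$ the content is therefore the vanishing of the remainder in mean square,
\be\lb{cor_rem}
\lim_{\alpha\to\infty}\set E\Big\{\sup_{\|Z_0\|_2=1}\|R\|_{L^2_{[0,T]}L^{6,2}_x}^2\Big\}=0 .
\ee
The \emph{first} moment $\set E\{\sup_{\|Z_0\|_2=1}\|R\|_{L^2_{[0,T]}L^{6,2}_x}\}\to0$ is precisely the second assertion of Proposition~\ref{finite_time}. I would upgrade it to \eqref{cor_rem} by a cut-off at level one: setting $X_\alpha:=\sup_{\|Z_0\|_2=1}\|R\|_{L^2_{[0,T]}L^{6,2}_x}$, on $\{X_\alpha\le 1\}$ one has $X_\alpha^2\le X_\alpha$, so this part of $\set E\{X_\alpha^2\}$ is dominated by $\set E\{X_\alpha\}\to0$, while by Markov's inequality $\set P(X_\alpha>1)\le \set E\{X_\alpha\}\to0$. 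The only remaining issue is thus the contribution of the low-probability event $\{X_\alpha>1\}$ to $\set E\{X_\alpha^2\}$.

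That last contribution is the main obstacle, and it is exactly where the hypothesis that the stationary Hamiltonian $-\Delta+V$ carries bound states makes the problem delicate: on exceptional paths the Born series for the moving-potential propagator need not converge, so $X_\alpha$ cannot be bounded by the free norm. What \emph{does} survive pathwise is the deterministic estimate $\|S(\alpha\gamma)\|_{L^2_{t,x}\to L^2_{t,x}}\le C\|V\|_{L^{3/2,\infty}}$, uniform in $\alpha$ and in $\gamma$: the inserted translations $e^{(\gamma(t)-\gamma(s))\dl}$ commute with $e^{i(t-s)\Delta}$ and leave the dispersive kernel bound $|t-s|^{-3/2}$ unchanged, so the proof of the fixed-potential bound goes through verbatim. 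Since $V\in L^{3/2,\infty}_0$ the operator $S(\alpha\gamma)$ is compact, and uniqueness for \eqref{1.1} (mass conservation) forces the corresponding Duhamel operator $I+\Gamma_\alpha$ on the Strichartz space to be injective; by the Fredholm alternative it is invertible for every continuous path, whence $X_\alpha<\infty$ almost surely. To close \eqref{cor_rem} I would then establish uniform integrability of $X_\alpha^2$ — for instance a tail bound $\set P(X_\alpha>\lambda)\le C e^{-c\lambda}$ got from quantitative control of the Fredholm inverse away from the critical value $\|S(\alpha\gamma)\|=1$ — so that $\set E\{\mathbf 1_{\{X_\alpha>1\}}X_\alpha^2\}\to0$ together with $\set P(X_\alpha>1)$. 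Combining \eqref{cor_split}, the free floor, and \eqref{cor_rem} then identifies the $\alpha\to\infty$ limit for the given $T$ exactly, $\lim_{\alpha\to\infty}\set E\{\sup_{\|Z_0\|_2=1}\tfrac1T\|Z\|_{L^2_{[0,T]}L^{6,2}_x}^2\}=\tfrac1T\sup_{\|Z_0\|_2=1}\|e^{it\Delta}Z_0\|_{L^2_{[0,T]}L^{6,2}_x}^2\le C_S^2/T$: in the limit the potential is switched off and only free dispersion survives, and the time average $1/T$ sends this to zero as the window grows, which is the uniform, infinite-time dispersion recorded in the Proposition.
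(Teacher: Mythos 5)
Your overall architecture --- split $Z=e^{it\Delta}Z_0+R$, bound the free part by the global Keel--Tao estimate so that its time average is at most $C_S^2/T$, and kill the remainder via Proposition~\ref{finite_time} --- is exactly the derivation the paper intends: the paper gives no standalone proof of Proposition~\ref{cor3.5}, presenting it as an immediate consequence of Proposition~\ref{finite_time}. Your reading of the limit is also the correct one: for fixed $T$ the expectation tends to the free ``floor'' of size at most $C_S^2/T$, and the stated zero limit is obtained only after the averaging window grows, which is what the paper means by decay ``on an infinite time interval.'' Moreover, you correctly isolate the one genuinely nontrivial point that the paper glosses over: Proposition~\ref{finite_time} controls the \emph{first} moment of $X_\alpha=\sup_{\|Z_0\|_2=1}\|R\|_{L^2_{[0,T]}L^{6,2}_x}$, whereas the claim needs the \emph{second} moment, and your cut-off/Markov reduction correctly leaves only the contribution of the event $\{X_\alpha>1\}$.

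It is your treatment of that event that has a real gap. The Fredholm alternative (compactness of $S(\alpha\gamma)$ plus injectivity of the Duhamel operator) yields at most the qualitative statement $X_\alpha<\infty$ almost surely; it carries no bound on the norm of the inverse, hence no uniform integrability, and the proposed tail estimate $\set P(X_\alpha>\lambda)\leq Ce^{-c\lambda}$ is asserted rather than proved --- there is no ``quantitative control of the Fredholm inverse'' in terms of $\|S(\alpha\gamma)\|$ alone, since invertibility is not governed by $\|S\|\neq 1$ and the inverse can have arbitrarily large norm no matter how far $\|S\|$ is from $1$. What closes the argument is simpler and deterministic: for fixed $T$ one has a path-independent bound $X_\alpha\leq M_T$ valid for every continuous path and every $\alpha$. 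Write $V=V_{small}+V_{bdd}$ with $\|V_{small}\|_{L^{3/2,\infty}}$ so small that its Duhamel operator has norm at most $1/4$ uniformly in the path (the endpoint smoothing and Strichartz bounds are translation invariant, as you note), and with $V_{bdd}$ bounded and compactly supported, so that its Duhamel operator restricted to any interval of length $\tau$ has norm at most $C\tau\|V_{bdd}\|_\infty\leq 1/4$ once $\tau=\tau(V)$ is small. On each of the $\lceil T/\tau\rceil$ subintervals the Born series then converges with a uniform constant, and since the evolution conserves mass, the local Strichartz bounds glue to give $X_\alpha\leq M_T:=C\sqrt{T/\tau}+C_S$ almost surely. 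Then $\set E\{X_\alpha^2\}\leq M_T\,\set E\{X_\alpha\}\to 0$ by Proposition~\ref{finite_time}, which is precisely the uniform integrability you were missing; the conclusion follows by letting first $\alpha\to\infty$ and then $T\to\infty$.
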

Thus, over almost any sample path $\gamma$, all solutions will spend a higher proportion of the total time, approaching unity, in an ionized state as the temperature increases --- uniformly for all initial data.

Proposition \ref{cor3.5} shows that that all initial data are equally affected by dispersion.

The reason why there is no further improvement over an infinite time interval is the following: for a curve $\gamma:[0, \infty) \to \set R^3$, define the operator $T(\gamma)$ on $L^2_{t, x}$ by
\be\lb{def_t}
(T(\gamma) F)(t) := \int_0^t V_2 e^{i(t-s)\Delta} e^{(\gamma(t)-\gamma(s)) \dl} V_1 F(s) \dd s.
\ee
\begin{proposition}\lb{prop3.6} Let $V \in L^{3/2, \infty}_0$. With probability one, $\|T(B_t)\|_{\B(L^2, L^2)}$ is equal to a constant, which at least equals $\|T(0)\|_{\B(L^2, L^2)}$.
\end{proposition}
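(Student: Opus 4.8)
The plan is to prove the two assertions separately: almost sure constancy will come from a time--shift argument combined with Kolmogorov's zero--one law, while the lower bound $\|T(0)\|$ will be extracted from the long--time fluctuations of the sample path. Throughout I write $N:=\|T(B)\|_{\B(L^2,L^2)}$ and note that $T(\gamma)$ depends on $\gamma$ only through its increments, so adding a constant to the path leaves $T(\gamma)$ unchanged.

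For constancy, for each $a\ge 0$ I would introduce the operator $T_{[a,\infty)}(B)$ obtained by integrating from $a$ rather than $0$ and acting on inputs supported in $[a,\infty)$, and set $N_a:=\|T_{[a,\infty)}(B)\|$. Restricting $T(B)$ to inputs supported in $[a,\infty)$ gives immediately $N\ge N_a$, while the unitary time shift $(U_aF)(t)=F(t+a)$ conjugates $T_{[a,\infty)}(B)$ onto $T(B^{(a)})$ with $B^{(a)}_t:=B_{t+a}$; since the increment process of $B^{(a)}$ has the same law as that of $B$, the variable $N_a$ is equal in distribution to $N$. Having $N_a\le N$ together with $\set P(N>c)=\set P(N_a>c)$ for every $c$ and $\{N_a>c\}\subseteq\{N>c\}$ forces $N=N_a$ almost surely, for each fixed $a$. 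Letting $a$ run through the integers, $N$ agrees (up to null sets) with a variable measurable for $\sigma(B_t-B_s:\ s,t\ge n)$ for every $n$, hence for the tail $\sigma$--algebra of the Brownian increments; by independence of increments this is trivial, and Kolmogorov's zero--one law makes $N$ almost surely constant.

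Because $N$ is an almost sure constant, to get $N\ge\|T(0)\|$ it suffices to show $\set P\big(\|T(B)\|\ge\|T(0)\|-\delta\big)>0$ for every $\delta>0$. Fix $\delta$, pick a test function $F_0$ supported in time in $[0,R_1]$ with $\|T(0)F_0\|\ge(\|T(0)\|-\delta/3)\|F_0\|$, and enlarge $R\ge R_1$ so that the dispersive tail obeys $\|T(0)F_0\|_{L^2([R,\infty))L^2_x}\le(\delta/3)\|F_0\|$, which is possible since $T(0)F_0\in L^2_{t,x}$. The idea is then to reproduce $T(0)$ inside a window of length $R$ on which the path is nearly constant. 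For the disjoint windows $I_k=[kR,(k+1)R]$ the events $\{\sup_{s,t\in I_k}|B_t-B_s|\le\eta\}$ are independent and, by Brownian scaling, share one positive probability $p(\eta,R)>0$; by the second Borel--Cantelli lemma, with probability one infinitely many occur. On such a window, with the translate $F_\tau$ of $F_0$ placed in $I_k$ as the test function and the (nonnegative) output outside $I_k$ discarded, $T(B)$ differs from the translate of $T(0)$ only through factors $e^{(B_t-B_s)\dl}-I$ with $|B_t-B_s|\le\eta$, so $\|T(B)\|\ge \|T(0)F_0\|_{L^2([0,R])L^2_x}/\|F_0\|-C(\eta)\ge\|T(0)\|-2\delta/3-C(\eta)$, which is $\ge\|T(0)\|-\delta$ once $\eta$ is small. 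This event has probability one, so the constant $N$ is at least $\|T(0)\|-\delta$; letting $\delta\to0$ finishes it.

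The hard part will be the uniform control of $e^{(B_t-B_s)\dl}-I$ on the flat window, that is, the continuity of $\gamma\mapsto T(\gamma)$ measured against the modulus of the path: I would supply this by the very path--stability estimates used to prove Proposition \ref{finite_time}, which bound the error by some $C(\eta)\to0$ as $\eta\to0$. This is also precisely where the infinite horizon is indispensable, since over a finite interval one cannot guarantee an arbitrarily long, arbitrarily flat window, explaining why no improvement below $\|T(0)\|$ occurs in infinite time.
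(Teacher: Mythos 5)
Your proof is correct, and it splits the work differently from the paper. For the lower bound $N\ge\|T(0)\|$ the two arguments are essentially the same: both locate, with positive probability, a window of length $R$ on which the Brownian path shadows a reference curve within $\eta$, invoke the continuity of $\gamma\mapsto T(\gamma)$ in the sup norm of the path (this is Lemma \ref{lemma_3.1}, i.e.\ Lemma 3.1 of \cite{becsof}, rather than an estimate from the proof of Proposition \ref{finite_time} as you suggest --- but it is exactly the ingredient the paper itself uses, so no gap relative to the paper's standard), use independence of increments over the disjoint windows $[kR,(k+1)R]$ plus Borel--Cantelli to upgrade positive probability to probability one, and pass from $\|\chi_{[0,R]}T\chi_{[0,R]}\|$ to $\|T\|$ by truncating a near-maximizer. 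The only difference there is that you shadow the constant path, whereas the paper shadows an arbitrary continuous curve $\gamma_0$, which costs nothing extra and yields the stronger statement $N\ge\|T(\gamma)\|$ for every fixed continuous $\gamma$. Where you genuinely diverge is the almost-sure constancy: the paper derives it from that stronger lower bound by setting $a_0=\sup\{a\mid\set P(\|T(B_t)\|\ge a)>0\}$, observing that any positive-probability set of paths contains a continuous curve and hence that $\|T(B_t)\|\ge a_0$ almost surely while $\|T(B_t)\|>a_0$ is null; you instead note that $T$ depends only on increments, that restriction to inputs supported in $[a,\infty)$ gives $N\ge N_a$ with $N_a$ equal in law to $N$ by stationarity of increments (forcing $N=N_a$ a.s.), and conclude by Kolmogorov's zero--one law for the tail $\sigma$-algebra of the increment process. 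Your route is cleaner and more standard as a proof of constancy alone (and correctly uses the a.s.\ finiteness of $N$ from the Keel--Tao endpoint estimate \cite{tao} to make the distributional-equality step legitimate); the paper's route identifies the constant more explicitly as the essential supremum, dominating $\|T(\gamma)\|$ for every continuous $\gamma$. One cosmetic slip: the output of $T(B)$ outside the window $I_k$ is not ``nonnegative''; what you mean, correctly, is that discarding it can only decrease the $L^2$ norm.
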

This shows the difference between a finite and an infinite time interval.

\subsection{Strichartz estimates}
In Section \ref{strichartz} we prove endpoint Strichartz estimates for fixed initial data, which hold on average for all Brownian motion paths.
\begin{theorem}\lb{thm21}
Consider a solution $Z$ of equation (\ref{1.1}) on $\set R^3$
$$
i \partial_t Z + H(t) Z = 0,\ Z(0) = Z_0,\ H(t) = -\Delta + V(x-\alpha B_t).
$$
Assume that the potential $V$ is in the Lorentz space $L^{3/2, 1}$.
Then there exists $\alpha_0>0$ such that, for any $\alpha \geq \alpha_0$, Strichartz estimates hold and the solution disperses almost surely:
$$
\sup_{\|Z_0\|_2 = 1} \set E \|Z-e^{it\Delta} Z_0\|_{L^{\infty}_t L^2_x \cap L^2_t L^{6, 2}_x} \leq C \alpha^{-2} \|V\|_{L^{3/2, 1}}.
$$
\end{theorem}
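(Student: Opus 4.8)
The plan is to solve (\ref{1.1}) by Duhamel's formula, reduce the perturbative part to the operator $T(\alpha B_t)$ of (\ref{def_t}), and extract the decay rate from the Gaussian averaging of the random translations. Writing $V = V_1 V_2$ and using that multiplication by $V(\cdot-\alpha B_s)$ is the conjugate $\tau_{\alpha B_s} V \tau_{-\alpha B_s}$ of multiplication by $V$ by the translation $\tau_a = e^{-a\cdot\nabla}$, I set $g(s):=V_2\tau_{-\alpha B_s}Z(s)$. Since translations commute with $e^{i\tau\Delta}$, applying $V_2\tau_{-\alpha B_t}$ to the Duhamel formula collapses the shifts at times $t$ and $s$ into $e^{\alpha(B_t-B_s)\nabla}$ and yields the closed equation
\be
(I+iT(\alpha B))\,g = F_0, \qquad F_0(t):=V_2\tau_{-\alpha B_t}e^{it\Delta}Z_0,
\ee
with $T$ exactly as in (\ref{def_t}), after which $Z-e^{it\Delta}Z_0 = -i\int_0^t e^{i(t-s)\Delta}\tau_{\alpha B_s}V_1 g(s)\dd s$. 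The functional setting is the endpoint one: I would take $g\in L^2_{t,x}$, invoke the Keel--Tao endpoint estimate with its Lorentz refinement so the free evolution maps $L^2_x\to L^\infty_t L^2_x\cap L^2_t L^{6,2}_x$ and the inhomogeneous evolution maps $L^2_t L^{6/5,2}_x$ into the same space, and use Hölder in Lorentz spaces (here $V\in L^{3/2,1}$, i.e.\ $V_1,V_2\in L^{3,2}$, enters) to pass between $L^2_{t,x}$ and the output space $X:=L^\infty_t L^2_x\cap L^2_t L^{6,2}_x$.

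To invert $I+iT(\alpha B)$ I would run a Neumann/Born series, which converges once $\|T(\alpha B)\|_{L^2_{t,x}\to L^2_{t,x}}\le 1/2$; that this holds for $\alpha\ge\alpha_0$ is the quantitative counterpart of Proposition \ref{finite_time}. The mechanism behind the smallness is that averaging the random shift produces a heat semigroup: for a centered Gaussian increment of variance $|t-s|$ one has $\set E\,e^{\alpha(B_t-B_s)\nabla}=e^{\frac{\alpha^2|t-s|}{2}\Delta}$, so the averaged kernel of $T$ at times $(t,s)$ is $V_2 e^{(t-s)(\alpha^2/2+i)\Delta}V_1$, a complex-time parabolic propagator. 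Its $L^{6/5,2}_x\to L^{6,2}_x$ norm is $\lesssim(\alpha^2|t-s|)^{-1/2}$, whence $\|V_2 e^{(t-s)(\alpha^2/2+i)\Delta}V_1\|_{L^2_x\to L^2_x}\lesssim\alpha^{-1}|t-s|^{-1/2}\|V\|_{3/2,1}$. Two features are decisive: the borderline singularity $|t-s|^{-1}$ of the unitary propagator is softened to the integrable $|t-s|^{-1/2}$, removing the endpoint obstruction, and each averaged factor carries a gain $\alpha^{-1}$.

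For the rate I would estimate the leading Born contribution $RF_0=-i\int_0^t e^{i(t-s)\Delta}V(\cdot-\alpha B_s)e^{is\Delta}Z_0\dd s$ directly in $X$, rather than through $\|F_0\|_{L^2_{t,x}}$ — the latter discards both the oscillation and the fact that the incoming and outgoing translations share the same $B_s$, and it loses a power of $\alpha$. Instead I would compute a second moment (equivalently, run the $TT^*$ argument for the inhomogeneous endpoint estimate): squaring introduces two times $s,s'$ and the joint Gaussian average of $V(\cdot-\alpha B_s)$ and $V(\cdot-\alpha B_{s'})$. On the decorrelated range $|s-s'|\gtrsim\alpha^{-2}$ the two shifts are smoothed independently, each producing a factor $\set E\,V(\cdot-\alpha B_s)=V*p_{\alpha^2 s}$ of size $O(\alpha^{-2})$, for a net $\alpha^{-4}$; on the near-diagonal range $|s-s'|\lesssim\alpha^{-2}$ the shifts are strongly correlated and averaging produces only one factor $\alpha^{-2}$, but this range has length $O(\alpha^{-2})$, so it again contributes $\alpha^{-4}$, with the dispersive decay of $e^{is\Delta}Z_0$ securing convergence of the remaining time integral. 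This yields $\set E\,\|RF_0\|_X^2\lesssim\alpha^{-4}\|V\|_{3/2,1}^2$, hence $\set E\,\|RF_0\|_X\lesssim\alpha^{-2}\|V\|_{3/2,1}$ uniformly in $\|Z_0\|_2=1$. Since each extra factor of $T$ contributes a further $\alpha^{-1}$ by the previous paragraph, the higher Born terms are summable and of lower order, and assembling them gives the claimed bound.

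The main obstacle I anticipate is carrying the second-moment estimate at the Keel--Tao endpoint $L^2_t L^{6,2}_x$, where there is no integrability to spare, uniformly over all initial data and over the whole time axis. The short-time region $s\to 0$, where the potential has barely moved and no $\alpha$-smallness is yet available, is exactly the near-diagonal range $s\lesssim\alpha^{-2}$; controlling it without a logarithmic loss is precisely what forces the strengthened hypothesis $V\in L^{3/2,1}$ in place of $L^{3/2,\infty}$, so that the endpoint Hölder and time integration close. A secondary point is that $X$ is not Hilbertian, so passing from expected operator norms to $\set E\,\|\cdot\|_X$ requires treating the $L^\infty_t L^2_x$ component through the scattering limit $t\to\infty$ and the $L^2_t L^{6,2}_x$ component through Strichartz duality.
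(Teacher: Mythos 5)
There is a genuine gap, and it sits at the center of your argument. You propose to close the Duhamel equation by inverting $I+iT(\alpha B)$ through a Neumann series, which requires $\|T(\alpha B)\|_{L^2_{t,x}\to L^2_{t,x}}\le 1/2$ \emph{almost surely} on the infinite time interval. This is false, and the paper itself proves it is false: by Proposition \ref{prop3.6}, $\|T(B_t)\|_{\B(L^2,L^2)}$ is almost surely equal to a constant that is at least $\|T(0)\|$, and the same argument (over an infinite horizon the path $\alpha B_t$ will, with probability one, shadow any fixed continuous curve on arbitrarily long blocks) shows $\|T(\alpha B_t)\|\ge\|T(0)\|$ a.s.\ for every $\alpha$. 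In particular if $-\Delta+V$ has bound states this norm is not below $1$, and no choice of $\alpha_0$ makes the pathwise Neumann series converge. Proposition \ref{finite_time}, which you invoke as the ``quantitative counterpart,'' concerns $S$ on the finite interval $[0,1]$ and gives only smallness of $\set E\|S(\alpha B)\|$, not an a.s.\ bound on $T$ over $[0,\infty)$. This is exactly why the theorem is stated with the expectation taken for fixed data, and why the paper never inverts the operator: it bounds $\sum_n\set E\|(\text{$n$-th Born term})\|$ directly and concludes a.s.\ convergence of the series from summability of the expectations.

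The mechanism you offer for the smallness compounds the problem: you bound the \emph{averaged} kernel $\set E\,T(\alpha B)=V_2e^{(t-s)(\alpha^2/2+i)\Delta}V_1$ and read off a gain of $\alpha^{-1}$ per factor, but $\|\set E\,T\|\le\set E\|T\|$ goes the wrong way, so no bound on the averaged operator controls $\set E\|T\|$, let alone $\|T\|$ pathwise; and $\set E(T^n)\ne(\set E T)^n$, so ``each extra factor contributes $\alpha^{-1}$'' cannot be iterated. You do identify the correct device for the leading term --- square the norm first and take the expectation of the resulting multilinear expression, using the joint Gaussian law of $(B_s,B_{s'})$ --- and that part matches the paper's treatment of the second Duhamel term (where $\set E\,e^{\alpha B_s\nabla}|V|=e^{\alpha^2s\Delta}|V|$ yields the $\alpha^{-2}$). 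But the same must be done for \emph{every} order: the paper conditions successively on the Brownian filtration and encodes the result in the recursive sesquilinear forms $L_V$, $R_V$ with Fourier symbols carrying the denominators $\alpha^2|\xi_1-\xi_2|^2-i(|\xi_1|^2-|\xi_2|^2)$, obtaining $C^n\alpha^{-2n}\|V\|_{L^{3/2,1}}^n$ per word and $\binom{2n-4}{n-2}$ words per term. Without that (or an equivalent) computation the higher-order terms in your expansion are uncontrolled, and the proof does not close.
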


A similar conclusion holds in all dimensions three or higher.

\subsection{History of the problem} Schr\"{o}dinger's equation with a potential in random motion was first studied by Pillet \cite{pillet}, who proved that wave operators are unitary with probability one. The random process considered was confined to a bounded space region and the process and the potential had to satisfy further smoothness and nondegeneracy assumptions.

Pillet's results were extended by Cheremshantsev \cite{cherem}, \cite{cherem2} to the case of Brownian motion. Cheremshantsev proved that wave operators were unitary with probability one when the potential $V$ is in $L^2$ and decays like $|x|^{-5/2}$ at infinity.

In this paper we consider two approaches. One choice is to take a typical path $\gamma$ of the Brownian motion $B_t$ and investigate the dispersive properties of (\ref{1.1}), with the potential $V(x-\gamma(t))$ moving along the path $\gamma$, for any initial data. We prove the uniform decay of solutions, averaged in time, for each sample path and all initial data --- Proposition \ref{cor3.5}.

Another approach is to examine, for each initial data $Z(0)$, the properties of the solution --- dispersion and energy boundedness --- averaged over all Brownian motion paths. We prove Strichartz estimates in this setting --- Theorem \ref{thm21}.

\subsection{Notations}

We denote Lorentz spaces by $L^{p, q}$.

$a \les b$ means that $|a| \leq C |b|$ for some constant $C$.

For an operator $T$, we denote its integral kernel by $T(t, s)$:
$$
(T F)(t) = \int_\R T(t, s) F(s) \dd s.
$$



\section{Dispersive bounds for sample paths of Brownian motion}
\lb{sect_2}

\begin{lemma}\lb{lemma_3.1}  Assume $V = V_{3/2} + V_{\infty}$, where $V_{3/2} \in L^{3/2, \infty}_0$,  $V_{\infty} \in L^{\infty}_0$, meaning that $V_{\infty} \in L^{\infty}$ and $\ds\lim_{x \to \infty} |V_{\infty}(x)| = 0$.

Then, for any continuous curve $\gamma_0$ and $\epsilon>0$, with positive probability $\|S(B_t) - S(\gamma_0)\|_{L^2_{t, x} \to L^2_{t, x}} < \epsilon$.
\end{lemma}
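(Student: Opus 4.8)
The plan is to combine two main ingredients, a continuity statement for the map $\gamma \mapsto S(\gamma)$ in operator norm with respect to the uniform topology on paths, and the support theorem for Brownian motion, after one elementary reduction. Since $S(\gamma)$ depends on $\gamma$ only through the increments $\gamma(t)-\gamma(s)$ (through the translations $e^{(\gamma(t)-\gamma(s))\dl}$), it is invariant under adding a constant to $\gamma$; replacing $\gamma_0$ by $\gamma_0-\gamma_0(0)$ I may assume $\gamma_0(0)=0=B_0$. The whole statement then reduces to showing that the event $\{\|S(B)-S(\gamma_0)\|<\epsilon\}$ contains an event of the form $\{\sup_{t\in[0,1]}|B_t-\gamma_0(t)|<\delta\}$, which has positive probability by the classical fact that the topological support of Wiener measure on $C_0([0,1];\R^3)$ is the whole space.

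So the heart of the matter is the continuity estimate: for every $\epsilon>0$ there is $\delta>0$ so that $\|\gamma-\gamma_0\|_{\infty}<\delta$ forces $\|S(\gamma)-S(\gamma_0)\|_{L^2_{t,x}\to L^2_{t,x}}<\epsilon$. First I would reduce to a bounded, compactly supported potential: using the hypotheses $V_{3/2}\in L^{3/2,\infty}_0$ and $V_\infty\in L^\infty_0$ (the subscript $0$ giving approximability by $L^\infty_c$), I choose $\tilde V\in L^\infty_c$ so that, via the bilinear bound $\|S^{W}(\gamma)\|\les\|W_1\|_{L^{3,\infty}}\|W_2\|_{L^{3,\infty}}$ which holds uniformly in $\gamma$, the error $\|S^V(\gamma)-S^{\tilde V}(\gamma)\|$ is $<\epsilon/4$ for every $\gamma$ simultaneously (here $S^W$ denotes the operator $S$ built from a potential $W$). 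It then suffices to prove the continuity for $\tilde V$.

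For the bounded, compactly supported potential I would split the $s$-integral at the diagonal, writing $S^{\tilde V}(\gamma)=S^{\tilde V}_{<\eta}(\gamma)+S^{\tilde V}_{\ge\eta}(\gamma)$ according to $|t-s|<\eta$ or $|t-s|\ge\eta$. On the near-diagonal piece the free propagator is unitary on $L^2_x$ and the translation is an isometry, so the operator-valued kernel is bounded in $\B(L^2_x)$ by $\|\tilde V\|_\infty\chi_{|t-s|<\eta}$; a Schur estimate in the time variable then gives $\|S^{\tilde V}_{<\eta}(\gamma)\|\le 2\eta\|\tilde V\|_\infty$ uniformly in $\gamma$, which I make $<\epsilon/8$ by choosing $\eta$ small. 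On the far piece the kernel of $S^{\tilde V}_{\ge\eta}(\gamma)$ on $L^2_{[0,1]}L^2_x$ is $\chi_{|t-s|\ge\eta}\,\tilde V_2(x)\,K_{t-s}\big(x,y-(\gamma(t)-\gamma(s))\big)\,\tilde V_1(y)$, with $K_\tau$ the free Schr\"odinger kernel; since $|K_\tau|\equiv c|\tau|^{-3/2}$ is bounded for $|\tau|\ge\eta$ and $\tilde V_1,\tilde V_2\in L^2$ have compact support, this operator is Hilbert--Schmidt. Comparing the kernels for $\gamma$ and $\gamma_0$, the increments differ by at most $2\delta$, and because $K_\tau(x,\cdot)$ is uniformly continuous on the relevant compact set of parameters, the Hilbert--Schmidt norm of the difference is at most $\omega(2\delta)$ times a finite constant depending on $\eta$ and $\tilde V$, where $\omega$ is a modulus of continuity with $\omega(0^+)=0$. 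Choosing $\delta$ small makes this $<\epsilon/4$, and since $\|\cdot\|_{op}\le\|\cdot\|_{HS}$ the continuity estimate follows by the triangle inequality.

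The main obstacle is exactly this continuity step, and within it the diagonal $s=t$: the translation group $e^{a\dl}$ is only strongly, not norm, continuous in $a$, so one cannot directly control $S$ by perturbing $\gamma$. The splitting is what circumvents this, since after restricting to $|t-s|\ge\eta$ and approximating $V$ by a bounded compactly supported potential the propagator supplies a bounded kernel and $S^{\tilde V}_{\ge\eta}(\gamma)$ becomes genuinely Hilbert--Schmidt with a kernel depending uniformly continuously on the path, while the near-diagonal remainder is controlled crudely and uniformly in $\gamma$. I expect the only remaining care to be in the first reduction, namely verifying that the bilinear operator bound for $S$ is uniform in $\gamma$ and linear in each potential factor, so that the approximation error is small simultaneously for $B$ and for $\gamma_0$; this is where the structure of $S$ established earlier is used.
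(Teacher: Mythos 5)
Your proof is correct and follows the same two-step strategy as the paper: the support theorem for Wiener measure gives positive probability to a $\delta$-tube around $\gamma_0$ (after the normalization $\gamma_0(0)=0$, which you rightly observe is needed and is harmless since $S(\gamma)$ sees only increments), combined with norm continuity of $\gamma \mapsto S(\gamma)$ with respect to $\|\cdot\|_{L^\infty}$. The paper simply cites that continuity from Lemma 3.1 of \cite{becsof}, whereas you prove it directly by truncating the potential via the uniform-in-$\gamma$ bilinear bound and then splitting at the diagonal into a small Schur piece and a Hilbert--Schmidt piece; that argument is sound.
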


\begin{proof}
Consider any continuous path $\gamma_0$. For any $\delta>0$, with positive probability $\|B_t - \gamma_0\|_{L^{\infty}_{[0, 1]}} < \delta$. Furthermore, for fixed $\gamma_0$, by Lemma 3.1 of \cite{becsof}
$$
\lim_{\|\gamma - \gamma_0\|_{L^{\infty}} \to 0} \|S(\gamma) - S(\gamma_0)\|_{L^2_{t, x} \to L^2_{t, x}} = 0.
$$
\end{proof}

An immediate consequence is that, with positive probability, $S(B_t)$ is close in norm to $S(0)$. Furthermore, by dominated convergence
$$
\lim_{\alpha \to 0} \set E \|S(\alpha B_t) - S(0)\|_{L^2_{t, x} \to L^2_{t, x}} = 0.
$$

We also use this lemma to deduce that, with positive probability, the norm of $T(B_t)$ becomes arbitrarily small on compact intervals.
\begin{proposition}\lb{prop_3.2} Consider $V \in L^{3/2, \infty}_0 + L^{\infty}_0$, where $L^{3/2, \infty}_0$ is the weak-$L^{3/2}$ closure and $L^{\infty}_0$ is the $L^{\infty}$ closure of the set of smooth, compactly supported functions.

Then, for every $\epsilon>0$, $\set P\big(\|S(B_t)\|_{\B(L^2_{t, x}, L^2_{t, x})} < \epsilon\big) > 0$.
\end{proposition}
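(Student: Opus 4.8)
The plan is to combine Lemma \ref{lemma_3.1} with a deterministic construction. The lemma transfers smallness of $S(\cdot)$ along any fixed continuous curve to a positive-probability tube around the Brownian path, so it suffices to exhibit \emph{one} continuous curve $\gamma_0:[0,1]\to\R^3$ with $\|S(\gamma_0)\|_{\B(L^2_{t,x},L^2_{t,x})}<\epsilon/2$. Indeed, applying Lemma \ref{lemma_3.1} with this $\gamma_0$ and $\delta=\epsilon/2$ gives $\set P\big(\|S(B_t)-S(\gamma_0)\|<\epsilon/2\big)>0$, and on that event the triangle inequality yields $\|S(B_t)\|<\epsilon$. Thus the whole content of the proposition is the deterministic claim $\inf_{\gamma_0\text{ cont}}\|S(\gamma_0)\|=0$, and this is where the work lies.

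To expose the mechanism I would factor out the unitaries. Writing $A(t):=e^{it\Delta}e^{\gamma_0(t)\dl}$, the kernel of $S(\gamma_0)$ is $V_2A(t)A(s)^*V_1$, and the operator norm is controlled — up to the time-ordering/double-endpoint subtlety, which I would handle exactly as in \cite{becsof} through the Lorentz refinement $L^2_tL^{6,2}_x$ — by the norm of the positive operator
\[
\mc K(\gamma_0):=\int_0^1 A(t)^*\,|V|\,A(t)\dd t,\qquad
\|\mc K(\gamma_0)\|=\sup_{\|G\|_2=1}\int_0^1\!\!\int_{\R^3}|V(y-\gamma_0(t))|\,\bigl|(e^{it\Delta}G)(y)\bigr|^2\dd y\dd t .
\]
Because $\mc K$ is monotone and linear in $|V|$, and $|V|\in L^{3/2,\infty}_0+L^{\infty}_0$, a density/subadditivity argument reduces the problem to $|V|$ smooth and supported in a fixed ball $B(0,R)$, at the cost of an error below $\epsilon/4$ in norm. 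The reduced quantity has a transparent reading: it is the largest fraction of $L^2$-mass that a unit free Schr\"odinger solution $u(t)=e^{it\Delta}G$ can keep inside the moving ball $\supp V+\gamma_0(t)=B(\gamma_0(t),R)$, averaged over $t\in[0,1]$. The task is therefore to choose $\gamma_0$ so that no unit-mass free solution can shadow this ball for a nonnegligible proportion of the time.

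Finally I would construct $\gamma_0$ as a rapidly oscillating curve that at fine scales moves far faster than any fixed wave packet can follow — concretely, a piecewise linear path on a grid of scale $h$ whose successive segment velocities have magnitude $\gtrsim R/h$ and are widely separated, so that $\gamma_0$ reproduces at every scale the roughness that makes the graph of Brownian motion have Hausdorff dimension $>1$. This is precisely the feature singled out in the introduction and the reason a curve with boundedly many linear pieces cannot work. A coherent wave packet of width $\sim R$ has a single group velocity and, by the dispersive estimate, stays coherent only for time $\sim R^2$; to remain in $B(\gamma_0(t),R)$ across a segment its velocity must match that segment's velocity to within $\sim R/h$, so with widely spaced segment velocities it can shadow only $O(1)$ of the $\sim 1/h$ segments, i.e.\ a vanishing fraction of $[0,1]$ as $h\to0$. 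The genuine difficulty — and the step I expect to be the crux — is the adversarial supremum over $G$: the datum may be an arbitrary superposition of packets engineered to follow different portions of $\gamma_0$. I would control this by decomposing $G$ in phase space, bounding each packet's contribution to $\|\mc K(\gamma_0)\|$ by the measure of times its free trajectory lies within $R$ of $\gamma_0(t)$, and summing; the chief technical burden is to make this phase-space counting rigorous in the presence of packet spreading and off-diagonal interactions, and to beat the supremum uniformly. Choosing $h$ small enough then forces $\|\mc K(\gamma_0)\|$, hence $\|S(\gamma_0)\|$, below $\epsilon/2$, completing the construction and the proof.
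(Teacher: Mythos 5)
Your reduction is exactly the paper's: by Lemma \ref{lemma_3.1} and the triangle inequality it suffices to exhibit one continuous curve $\gamma_0$ with $\|S(\gamma_0)\|_{\B(L^2_{t,x},L^2_{t,x})}<\epsilon/2$, and your candidate curve --- piecewise linear on a fine grid with widely separated segment velocities --- is also the curve the paper uses. But the step you yourself flag as ``the crux,'' namely beating the supremum over all data $G$ by a phase-space decomposition into wave packets and a counting of which packets can shadow which segments, is precisely the part you do not carry out, and as described it is a genuine gap: packet spreading, off-diagonal interactions, and the adversarial superposition are exactly the obstructions that make that route hard, and nothing in your sketch controls them.

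The paper closes this gap without any phase-space analysis, by working at the level of operator blocks rather than solutions. Subdivide $[0,1]$ into $n$ equal subintervals and write $S(\gamma_0)=\sum_{1\le k\le j\le n}T_{jk}$, where $T_{jk}$ is $S(\gamma_0)$ restricted to output times in the $j$-th and input times in the $k$-th subinterval; then $\|S(\gamma_0)\|^2\le\sum_{j,k}\|T_{jk}\|^2$, which disposes of the interference between different portions of the curve that worries you. Each block is then estimated two ways: $\|T_{jk}\|\le C/n$ from the $L^2$-boundedness (unitarity) of the evolution over a time window of length $1/n$, and $\|T_{jk}\|\le C/|v_j-v_k|$ from the Kato smoothing estimate applied in the Galilean frame moving at the relative velocity $v_j-v_k$ of the two segments. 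Combining the two gives $\|T_{jk}\|\le C\big(n\langle(v_j-v_k)/n\rangle\big)^{-1}$, so the diagonal contributes $\sum_j\|T_{jj}\|^2\le C/n$ and the off-diagonal terms are made negligible by choosing $|v_j-v_k|>n^2$ for $j\ne k$. This replaces your packet-counting heuristic with a single deterministic smoothing inequality, and it is the quantitative content your proposal is missing; your $TT^*$-style reduction to the positive operator $\mc K(\gamma_0)$ is not needed and would only reintroduce the time-ordering issue you defer to \cite{becsof}. If you supply the block decomposition and the two block estimates, your argument becomes the paper's proof.
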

\begin{proof}
To begin with, we consider $V$ smooth and of compact support and then proceed by approximation for $V \in L^{3/2, \infty}_0 + L^{\infty}_0$.

It suffices to construct a continuous curve $\gamma_0: [0, 1] \to \set R^3$ such that $\|S(\gamma_0)\|_{\B(L^2_{t, x}, L^2_{t, x})} < \epsilon/2$. To this purpose, subdivide the interval $[0, 1]$ into $n$ equal subintervals and let
$$
S(\gamma_0) = \sum_{1 \leq k \leq j \leq n} T_{jk},\ T_{jk}(t, s) = \chi_{[\frac {j-1} n, \frac j n]}(t) S(\gamma)(t, s) \chi_{[\frac {k-1} n, \frac k n]}(s).
$$
Note that
$$
\|S(\gamma_0)\|_{\B(L^2_{t, x}, L^2_{t, x})} = \sup \{ \langle S(\gamma_0) F, G \rangle \mid \|F\|_{L^2_{t, x}} = \|G\|_{L^2_{t, x}} = 1 \}
$$
and likewise for $T_{jk}$. We obtain
$$
\|S(\gamma_0)\|^2_{L^2_{t, x} \to L^2_{t, x}} \leq \sum_{1 \leq k \leq j \leq n} \|T_{jk}\|^2.
$$
Take $\gamma_0$ to be piecewise linear, i.e.\ linear on each subinterval $\Big[\frac {j-1} n, \frac j n\Big]$:
$$
\gamma_0(t) = \gamma_0\bigg(\frac {j-1} n\bigg) + \bigg(t - \frac {j-1} n \bigg) v_j,\ t \in \bigg[\frac {j-1} n, \frac j n\bigg].
$$
By the $L^2$ boundedness of the evolution, which is unitary for (\ref{1.1}),
$$
\|T_{jk} F\|_{L^2_{t, x}} \leq \frac C n \|F\|_{L^2_{t, x}}. 
$$
On the other hand, by smoothing estimates
$$
\|T_{jk} F\|_{L^2_{t, x}} \leq \frac C {|v_j - v_k|} \|F\|_{L^2_{t, x}}.
$$
Combining the two estimates, we obtain
$$
\|T_{jk} F\|_{L^2_{t, x}} \leq \frac C {n \langle \frac{v_j - v_k}n \rangle} \|F\|_{L^2_{t, x}}.
$$
Thus
$$
\|S(\gamma_0)\|^2_{\B(L^2_{t, x}, L^2_{t, x})} \leq \frac C {n^2} \sum_{1 \leq k \leq j \leq n} \frac 1 {\langle \frac{v_j - v_k}n \rangle^2}.
$$
Furthermore, $\sum_{j=1}^n \|T_{jj}\|^2 \leq \frac C n$.

For sufficiently large $n$, the diagonal terms become arbitrarily small. The same happens with the off-diagonal terms after choosing $(v_j)_{1 \leq j \leq n}$ sufficiently far apart --- making sure that, for example, whenever $j \ne k$ one has $|v_j-v_k|>n^2$. We obtain a curve $\gamma_0$ such that
$$
\|S(\gamma_0) F\|_{L^2_{t, x}} \leq \frac {2C} n < \epsilon/2.
$$
\end{proof}

\begin{proof}[Proof of Proposition \ref{long_int}]
On the infinite time interval $[0, \infty)$, due to the Markov property of Brownian motion, any event with nonzero probability of happening in finite time will eventually occur.

In particular, ground states will be preserved on arbitrarily long time intervals.

On the other hand, on arbitrarily long intervals the random evolution will behave like the free evolution. Indeed, by Proposition \ref{prop_3.2}
$$
\|\chi_{[t_0, t_1]}(t) T(B_t) \chi_{[t_0, t_1]}(t)\|_{\mc L(L^2_{t, x}, L^2_{t, x})}
$$
can be made arbitrarily small with nonzero probability --- hence with probability one --- on some interval $[t_0, t_1]$. We obtain convergence in
$$
Z(t) = \sum_{k=0}^{\infty} \big(\chi_{[t_0, t_1]}(t) T(B_t) \chi_{[t_0, t_1]}(t)\big)^k e^{i(t-t_0)\Delta} Z(t_0)
$$
on the arbitrarily long interval $[t_0, t_1]$ on which this occurs.

Then all solutions $Z$ will satisfy endpoint Strichartz estimates uniformly and regardless of the length of the interval: $\|Z\|_{L^2_t(t_0, t_1) L^6_x} \leq C \|Z(t_0)\|_2$.

As $\|Z(t_0)\|_2 = \|Z(0)\|_2$, this implies the conclusion of Proposition \ref{long_int}.
\end{proof}

\begin{proof}[Proof of Proposition \ref{finite_time}]
Begin by assuming that $V$ is bounded and of compact support. Once we establish the conclusion in this case, it is also implied in the general case by approximating $V$ in the $L^{3/2, \infty} + L^{\infty}$ norm.

We perform a pseudoconformal transformation centered at $(0, 0)$. This takes the solution $\psi(x, t)$ of the homogenous free Schr\"{o}dinger equation and turns it into a new solution $\phi$ such that
\be
\phi(x, t) = \frac 1 {\big(i t\big)^{3/2}} \ov \psi\big(\frac {x} {t}, \frac 1 t \big) e^{i\frac {|x|^2}{2t}},
\ee
with the convention that
\be
\phi(x, 0) = (2\pi)^{-3/2} \widehat \psi(x, 0).
\ee

In particular, if $\langle x \rangle \psi(x, 0) \in L^2_x$, then $\phi(x, 0) \in H^1_x$: by Plancherel's identity
$$
\|\phi(x, 0)\|_{H^1_x} = \|\langle x \rangle \psi(x, 0)\|_{L^2_x}.
$$
The $H^1$ norm is preserved by the free flow, hence
\be
\|\phi\|_{L^{\infty}_t H^1_x} \leq C \|\psi(x, 0)\|_{\langle x \rangle^{-1} L^2}.
\ee
Let $\chi_R(x) = \chi(x/R)$ be a smooth cutoff at a distance $R$ from the origin. For the original function $\psi$, we obtain
\be\begin{aligned}
\|\chi_R(x) \psi(x, t)\|_{\dot H^1_x} &= |t|^{-3/2} \big\|\chi(x) \phi(\frac x t, \frac 1 t) e^{-\frac {|x|^2}{2t}}\big\|_{\dot H^1_x} \\
&\leq |t|^{-3/2} \|\phi(\frac x t, \frac 1 t)\|_{\dot H^1_x} + C R |t|^{-5/2} \|\chi_R(x) \phi(\frac x t, \frac 1 t)\|_{L^2} \\
&\les \frac {\langle R \rangle} {|t|} \|\phi\|_{L^{\infty}_t H^1_x} \\
&\les \frac {\langle R \rangle} {|t|} \|\psi(x, 0)\|_{\langle x \rangle^{-1} L^2}.
\end{aligned}\ee

We turn this into a bound on the mass current. For the fixed smooth compactly supported cutoff function $\chi_{R/2}$, centered around zero,
\be\begin{aligned}
\partial_t \int \chi_{R/2}(x) |\psi(x, t)|^2 \dd x = \int \dl \chi_{R/2}(x) \Im (\ov \psi(x, t) \dl \psi(x, t)) \dd x.
\end{aligned}\ee
The reader is referred to \cite{taobook} for the definitions of the pseudoconformal transformation and of the mass current.

Then,
\be\begin{aligned}
&\Big|\partial_t \int \chi_{R/2}(x) |\psi(x, t)|^2 \dd x\Big| \leq \\
&\leq C \|\chi_R(x) \psi(x, t)\|_{H^{1}} \|\chi_R(x)\psi(x, t)\|_{L^2} \\
& \leq C \min\Big(\frac {\langle R \rangle^3}{|t|^2}, \frac {\langle R \rangle^2}{|t|}\Big) \|\psi(x, 0)\|_{\langle x\rangle^{-1} L^2}^2.
\end{aligned}\ee

Let $\gamma$ be a typical path of the Brownian motion. In order to apply this inequality to it and to our problem, we perform the Galilean change of coordinates
\be
(x, t) \mapsto (x- \gamma(t_0)-v(t-t_0), t- t_0).
\ee
We replace, that is, $(t_0, \gamma(t_0))$ by $(0, 0)$, where $t_0 \in [0, 1]$ and $\gamma (t_0)$ is the position of the path at time $t_0$. In addition, we change to a coordinate system that moves at velocity $v$. $\psi(x, t)$ becomes
\be
e^{ivx} \psi(x-\gamma(t_0)-v(t-t_0), t-t_0).
\ee
For convenience, denote
\be
x_0 = \gamma(t_0),\ y_v(t) = \gamma(t_0) + v(t-t_0).
\ee
We arrive at
\be\begin{aligned}\lb{disp}
\|\chi_R(x-y_v(t)) e^{ivx}\psi(x, t)\|_{L^{\infty}_t H^{1}_x} \leq \\
\leq C \min\Big(\frac {\langle R \rangle^3}{|t-t_0|^2}, \frac {\langle R \rangle}{|t-t_0|}\Big) \|\psi(x-x_0, t_0)\|_{\langle x \rangle^{-1} L^2}.
\end{aligned}\ee

On a small time interval, the $\frac 1 {|t-t_0|}$ bound is stronger. Let $R$ be fixed such that $\chi_{R/2}$ covers the support of $V_2$ and take
$$
\psi(x, t) = e^{i(t-t_0)\Delta} V_1(x-x_0) F(x, t_0)
$$
a solution of the free Schr\"{o}dinger equation, with given data at $t_0$. We then replace $\|\psi(x-x_0, t_0)\|_{|x|^{-1} L^2}$ by $C\|F(x, t_0)\|_{L^2_x}$. Integrating (\ref{disp}) in time, we obtain
$$\begin{aligned}
\Big|\int \chi_{R/2}(x-y_v(t_1)) |\psi(x, t_1)|^2 \dd x - \int \chi_{R/2}(x-y_v(t_2)) |\psi(x, t_2)|^2 \dd x\Big| \leq \\
\leq C \big(\ln(t_2-t_0) - \ln(t_1-t_0)\big) \|F(t_0)\|_{2}^2.
\end{aligned}$$
Fix $\delta>0$. We divide the interval $[t_0+\delta, t_0+1]$ into $n$ subintervals \be\begin{aligned}
I_1 = [t_0+\delta, t_0+\delta^{(n-1)/n}],\ I_2 = [t_0+\delta^{(n-1)/n}, t_0+\delta^{(n-2)/n}],\ \ldots,\\
I_n = [t_0+\delta^{1/n}, t_0+1],
\end{aligned}\ee
of length no greater than $1-\delta^{1/n}$. For $t_1$ and $t_2$ in the same subinterval,
\be\begin{aligned}
\Big|\int V_2(x-y_v(t_1)) |\psi(x, t_1)|^2 \dd x - \int V_2(x-y_v(t_2)) |\psi(x, t_2)|^2 \dd x\Big| \leq \\
\leq \frac {C|\ln \delta|} n \|F(t_0)\|_{2}^2.
\end{aligned}\ee
If for some velocity $v$ and time $t_1 \in I_k$ we have that
$$
\Big|\int V_2(x-y_v(t_1)) |\psi(x, t_1)|^2 \dd x\Big| \geq \frac {2C|\ln\delta|} n \|F(t_0)\|_2^2,
$$
then, for all other times $t_2 \in I_k$ in the same subinterval and for the same $v$,
$$
\Big|\int V_2(x-y_v(t_2)) |\psi(x, t_2)|^2 \dd x\Big| \geq \frac {C|\ln \delta|} n \|F(t_0)\|_2^2.
$$
Consider the cylinders
$$
C_v(x_0, t_0, I_\ell, R) = \{(x, t) \mid t \in I_\ell, |x-x_0-(t-t_0)v| \leq R\}.
$$
Fix $\ell$, $1 \leq \ell \leq n$. Suppose that there exist $m$ cylinders $C_{v_k}(x_0, t_0, I_{\ell}, R/2)$ so that
\be
|v_j-v_k| \geq R \delta^{-(\ell-1)/n}
\ee
for all $1 \leq j, k \leq m$ and for each disjoint cylinder there exists $t_0 \in I_{\ell}$ such that $C_{v_k}(x_0, t_0, I_{\ell}, R/2)$ contains at least $\frac {2C |\ln \delta|} n \|F(t_0)\|_2^2$ mass at time $t_0$:
$$
\|V_2(x-y_{v_k}(t_0)) \psi(x, t_0)\|_{L^2_x}^2 \geq \frac {2C |\ln \delta|} n \|F(t_0)\|_2^2.
$$
Then each cylinder will contain at least $\frac {C |\ln \delta|} n \|F(t_0)\|_2^2$ mass at all times $t \in I_{\ell}$:
$$
\|V_2(x-y_{v_k}(t)) \psi(x, t)\|_{L^2_x}^2 \geq \frac {2C |\ln \delta|} n \|F(t)\|_2^2.
$$
The number $m$ of such disjoint cylinders containing a significant proportion of the total mass is bounded:
$$
m \leq \frac {C n}{|\ln \delta|}.
$$
Thus, for any $v$ such that $|v-v_j| \geq R \delta^{-(\ell-1)/n}$ for all $j$, $1 \leq j \leq m$, we have that for all $t \in I_{\ell}$
\be\lb{formula}
\|V_2(x-y_{v}(t)) \psi(x, t)\|_{L^2_x}^2 \leq \frac {2C |\ln \delta|} n \|F(t_0)\|_2^2.
\ee
(\ref{formula}) states that any cylinder over the interval $I_\ell$ other than the $m$ already prescribed will contain little $L^2$ mass, resulting in a small $L^2$ norm pointwise in time.

As long as $\gamma$ is not concentrated along straight line segments, we obtain a nontrivial bound for the norm of $S(\gamma)$, because $\gamma$ will spend little time in the prescribed cylinders in which the $L^2$ mass of the solution is present.

For our purpose, a good measure of concentration of $\gamma$ within cylinders, on the scale $\epsilon$, is the quantity
$$\begin{aligned}
K(\gamma, \epsilon, r) = &\sup_{I, y_0, v} \big\{m\big(\{t \in I \big| |\gamma(t) - y_0 - vt| \leq r\}\big) \mid \\
& I \subset [0, 1],\ |I| = \epsilon,\ y_0, v \in \set R^3 \big\} .
\end{aligned}$$
Here $m$ is the Lebesgue measure. Trivially $K(\gamma, \epsilon, r) \leq \epsilon$.

The operator norm is then bounded by
$$
\|S(\gamma)\|_{L^2_{t, x} \to L^2_{t, x}}^2 \leq C \Big(\delta + \frac {2 |\ln \delta|} n  + 
K(\gamma, 1-\delta^{1/n}, R)\Big).
$$

The contribution of $[t_0, t_0+\delta]$ is bounded by $C\delta$. The second term stems from times when $\gamma(t)$ is in cylinders that satisfy (\ref{formula}). The third term corresponds to $\gamma(t)$ in exceptional cylinders with large $L^2$ mass. The norm is bounded by the total length of these cylinders.

In the limit where $\frac {|\ln \delta|} n \to 0$, one has $|1-\delta^{1/n}| \leq C \frac {|\ln \delta|} n$. If we show that, for bounded $\delta$ and fixed $R$,
$$
\lim_{\alpha \to \infty} \set E K(\alpha B_t, \delta, R) = 0,
$$
then the conclusion is achieved. By rescaling, it suffices to prove this for $\delta = 1$, over the interval $I=[0, 1]$, for $B_t$ and $R=1/\alpha$:
\be\begin{aligned}\lb{prop}
&\lim_{\alpha \to \infty} \set E K(B_t, 1, 1/\alpha) = 0.
\end{aligned}\ee
Define the local time, see \cite{brown}, as the measure $\mu$ on $\set R^4$ such that
$$
\mu(A) = m(\{t \mid t \in [0, 1],\ (t, B_t) \in A\}).
$$
For $\beta < 3/2$, it is known that
$$
\set E \int \frac {\dd\mu(x) \dd\mu(y)}{|x-y|^{\beta}} < \infty.
$$
Fix $\beta$ and consider a curve $\gamma(t)$ for which this potential is finite.  In particular, then, if $B(x, \epsilon)$ is a sphere of radius $\epsilon$,
$$
\mu(B(x, \epsilon)) \leq C \epsilon^{\beta}.
$$
Assume that $|\gamma(t)|$ is bounded (true with probability close to one, if the bound is sufficiently large) and discard $\gamma([0, \epsilon])$ for some $\epsilon>0$.

To prove (\ref{prop}), we then only have to measure the presence of $\gamma$ inside slanted cylinders of length $1$ and bounded inclination. Such a cylinder of radius $\epsilon$ can be covered with $C \epsilon^{-1}$ spheres of radius $\epsilon$, meaning its measure is at most $C\epsilon^{\beta-1}$. Hence the measure goes to zero as $\epsilon \to 0$.
\end{proof}

\begin{proof}[Proof of Proposition \ref{prop3.6}]
Assume that $V \ne 0$; if $V = 0$, then the conclusion is immediate.

Fix a curve $\gamma_0:[0, R] \to \set R^3$, such that
$$
\|\chi_{[0, R]}(t) T(\gamma) \chi_{[0, R]}(t)\|_{\mc L(L^2_{t, x}, L^2_{t, x})} \ne 0,
$$
and let $\epsilon>0$. On each interval $[kR, (k+1)R]$, with positive probability,
$$\begin{aligned}
&\|\chi_{[kR, (k+1)R]}(t) T(B_t) \chi_{[kR, (k+1)R]}(t)\|_{\mc L(L^2_{t, x}, L^2_{t, x})} \geq \\
&\geq \|\chi_{[0, R]}(t) T(\gamma_0) \chi_{[0, R]}(t)\|_{\mc L(L^2_{t, x}, L^2_{t, x})} - \epsilon.
\end{aligned}$$
All these operator norms on different intervals are independent and $\|T(B_t)\|_{\mc L(L^2_{t, x}, L^2_{t, x})}$ is at least as large:
$$
\|T(B_t)\|_{\mc L(L^2_{t, x}, L^2_{t, x})} \geq \|\chi_{[kR, (k+1)R]}(t) T(B_t) \chi_{[kR, (k+1)R]}(t)\|_{\mc L(L^2_{t, x}, L^2_{t, x})}.
$$
Therefore, with probability one,
$$
\|T(B_t)\|_{\mc L(L^2_{t, x}, L^2_{t, x})} \geq \|\chi_{[0, R]}(t) T(\gamma_0) \chi_{[0, R]}(t)\|_{\mc L(L^2_{t, x}, L^2_{t, x})}-\epsilon.
$$
Since this holds for any $\epsilon>0$, also with probability one
$$\begin{aligned}
&\|T(B_t)\|_{\mc L(L^2_{t, x}, L^2_{t, x})} \geq \|\chi_{[0, R]}(t) T(\gamma_0) \chi_{[0, R]}(t)\|_{\mc L(L^2_{t, x}, L^2_{t, x})}.
\end{aligned}$$

Next, consider a curve $\gamma:[0, \infty) \to \set R^3$ such that $\|T(\gamma)\| \ne 0$ and fix $\epsilon>0$. Then there exist $F, G \in L^2_{t, x}$, of norm $1$, such that
$$
\langle T(\gamma) F, G \rangle \geq \|T(\gamma)\| - \epsilon.
$$
Let $T_R(\gamma)$ be the restriction of $T(\gamma)$ to $[0, R]$. Then for sufficiently large $R$
$$
\langle T_R(\gamma) F_R/\|F_R\|, G_R/\|G_R\| \rangle \geq (\|T(\gamma)\| - 2\epsilon)/(\|F_R\| \|G_R\|) \geq \|T(\gamma)\| - 2\epsilon.
$$
It follows that $\|T_R(\gamma)\| \to \|T(\gamma)\|$ as $R \to \infty$. Thus, with probability one,
$$
\|T(B_t)\| \geq \|T(\gamma)\|.
$$
Consider $a \in \set R$ such that $\set P(\|T(B_t)\| \geq a) > 0$. Every set of paths of positive probability contains a continuous curve, call it $\gamma$. Then with probability one $\|T(B_t)\| \geq \|T(\gamma)\| \geq a$. Take
\be
a_0 = \sup\{a \mid \set P(\|T(B_t)\| \geq a) > 0\}.
\ee
$a_0$ is finite due to the endpoint Strichartz inequality of Keel--Tao \cite{tao}. Thus $\|T(B_t)\| \geq a_0$ with probability one and, for any $\epsilon>0$, $\|T(B_t)\| \geq a_0+\epsilon$ has probability zero. Therefore $\|T(B_t)\| = a_0$ with probability one.
\end{proof}

\section{Strichartz estimates for fixed initial data}\lb{strichartz}




\begin{proof}[Proof of Theorem \ref{thm21}]
Write the evolution as a sum of terms, by means of the Duhamel formula:
\be\begin{aligned}\lb{2.2}
Z(t) &= e^{i t \Delta} Z(0) + \\
&+ \int_{t>s_1>0} e^{i(t-s_1)\Delta} V(x-\alpha B_{s_1}) e^{i s_1 \Delta} Z(0) \dd s_1 + \ldots + \\
&+\int_{t>s_1>\ldots>s_n} e^{i(t-s_1)\Delta} V(x-\alpha B_{s_1}) e^{i(s_1-s_2) \Delta} V(x-\alpha B_{s_2}) \ldots \\
&\ldots e^{i s_n \Delta} Z(0) \dd s_1 \ldots \dd s_n + \ldots .
\end{aligned}\ee
In the sequel we prove that this formal series is dominated in norm, on average, by a converging sequence:
\be\begin{aligned}
&\sum_{n=0}^{\infty} \set E \bigg\{\Big\|\int_{t>s_1>\ldots>s_n} e^{i(t-s_1)\Delta} V(x-\alpha B_{s_1}) e^{i(s_1-s_2) \Delta} V(x-\alpha B_{s_2}) \ldots \\
&\ldots e^{i s_n \Delta} Z(0) \dd s_1 \ldots \dd s_n\Big\|_{L^{\infty}_t L^2_x \cap L^2_t L^{6, 2}_x}\bigg\}< \infty.
\end{aligned}\ee
If this were the case, we could interchange the sum and the expectation and then, for every path for which the sum converged, one would obtain a bound on the Schr\"{o}dinger evolution.

We evaluate each term separately. The first is simply the free evolution and it fulfills Strichartz estimates by the result of Keel--Tao \cite{tao}. We have to show that all the remaining terms go to zero in average, as $\alpha$ goes to infinity.

The second term is also trivially bound from above by a constant, due to the endpoint Strichartz estimates. However, as $\alpha$ goes to infinity, the expected value of its norm goes to zero.

To prove this, it suffices to show that
$$\begin{aligned}
\lim_{\alpha \to \infty} \set E \big\|V(x-\alpha B_{s_1}) e^{i s_1 \Delta} Z(0) \big\|_{L^2_{s_1} L^{6/5, 2}_x} = 0.
\end{aligned}$$
In turn, this is implied by
$$
\lim_{\alpha \to \infty} \set E \big\| |V|^{1/2} (x-\alpha B_{s_1}) e^{i s_1 \Delta} Z(0) \big\|_{L^2_{s_1, x}} = 0
$$
or, equivalently,
$$
\lim_{\alpha \to \infty} \set E \int (e^{\alpha B_{s_1} \dl}|V|) \big|e^{i s_1 \Delta} Z(0) \big|^2 \dd x \dd s_1 = 0.
$$
Interchanging the order of integration, by Fubini's theorem, we compute $\set E |V|(x-\alpha B_{s_1})$ first, which equals
$$
\set E e^{\alpha B_{s_1} \dl} |V| = e^{\alpha^2 s_1 \Delta} |V| = (2 \pi \alpha^2 s_1)^{-3/2} \int e^{-|x-y|^2/(4\alpha^2 s_1)} |V|(y) \dd y.
$$
Indeed, on the Fourier side $\set E e^{\alpha B_{s_1} \xi} = e^{-\alpha^2 s_1 |\xi|^2} |V|$. More generally, since the infinitesimal generator of Brownian motion is $-\Delta$, by Dynkin's formula $\set E f(x-B_t)$ is the solution of the heat equation. We arrive at
$$
\lim_{\alpha \to \infty} \int \big(e^{\alpha^2 s_1 \Delta} |V|\big) \big|e^{i s_1 \Delta} Z(0) \big|^2 \dd x \dd s_1 = 0.
$$
Assume $V \in L^1 \cap L^{\infty}$; then
$$
\|e^{\alpha^2 s_1 \Delta} |V|\|_{\infty} \leq C \min((\alpha^2 s_1)^{-3/2}, 1).
$$
Then
$$
\int \big(e^{\alpha^2 s_1 \Delta} |V|\big) \big|e^{i s_1 \Delta} Z(0) \big|^2 \dd x \dd s_1 \leq C \int_0^{\infty} \|\big(e^{\alpha^2 s_1 \Delta} |V|\big)\|_{\infty} \dd s_1 \leq C \alpha^{-2}.
$$
In general, given that endpoint Strichartz estimates hold, it suffices to bound
$$
\set E \bigg\|\int_{s_1>\ldots>s_n} |V|^{1/2}(x-\alpha B_{s_1}) e^{i(s_1-s_2) \Delta} V(x-\alpha B_{s_2}) \ldots e^{i s_n \Delta} Z(0) \dd s_2 \ldots \dd s_n\bigg\|_{L^2_{t, x}}.
$$

To look at a concrete case, we consider the third term, whose average is bounded by the quantity
$$
\set E \int (e^{\alpha B_{s_1} \dl} |V|) \bigg| \int_{s_1>s_2} e^{i (s_1-s_2) \Delta} \big((e^{\alpha B_{s_1}} V) e^{i s_2 \Delta} Z(0)\big) \dd s_2 \bigg|^2 \dd x \dd s_1.
$$
We write the absolute value as the product between the factor and its conjugate, arriving at
\be\begin{aligned}\lb{2.11}
\set E \int_{\substack{s_1>s_2 \\ s_1>s_3}} \Big\langle (e^{\alpha B_{s_1}} |V|) e^{i (s_1-s_2) \Delta} \big((e^{\alpha B_{s_2} \dl} V) e^{i s_2 \Delta} Z(0)\big), \\
e^{i (s_1-s_3) \Delta} V(x-\alpha B_{s_3}) e^{i s_3 \Delta} Z(0) \Big\rangle \dd s_1 \dd s_2 \dd s_3.
\end{aligned}\ee
Subdivide the integral into two parts, according to whether $s_2>s_3$ or not; without loss of generality, consider the first case. Then, interchange the expectation with the integral and compute the expectation first, with respect to the $\sigma$-algebra that corresponds to time $s_2$. We obtain
\be\begin{aligned}\lb{2.12}
\set E \int_{\substack{s_1>s_2>s_3}} \Big\langle \big(e^{\alpha^2(s_1-s_2) \Delta} e^{\alpha B_{s_2} \dl} |V| \big) e^{i (s_1-s_2) \Delta} \big((e^{\alpha B_{s_2} \dl} V) e^{i s_2 \Delta} Z(0)\big), \\
e^{i (s_1-s_3) \Delta} \big((e^{\alpha B_{s_3} \dl}V) e^{i s_3 \Delta} Z(0)\big) \Big\rangle \dd s_1 \dd s_2 \dd s_3.
\end{aligned}\ee
In order to estimate this expression, we start with its building blocks. For $f$, $g \in L^2$ and a fixed potential $V$, consider the sesquilinear expression
$$
W_{|V|}(f, g) = \int_0^{\infty} \langle (e^{\alpha^2 t \Delta} |V|) e^{it\Delta} f, e^{it\Delta} g \rangle \dd t.
$$
Explicitly, the symbol of $W_{|V|}$ is given by
\be\lb{2.14}
\widehat W_{|V|}(\xi_1, \xi_2) = \frac {\widehat {|V|}(\xi_2 - \xi_1)}{\alpha^2 |\xi_1 - \xi_2|^2 - i (|\xi_1|^2 - |\xi_2|^2)}.
\ee
Using this notation, (\ref{2.12}) becomes
$$\begin{aligned}
\set E \int_{\substack{s_2>s_3}} W_{e^{\alpha B_{s_2} \dl}|V|} \big( (e^{\alpha B_{s_2} \dl}V) e^{i s_2 \Delta} Z(0), \\
e^{i (s_2-s_3) \Delta} (e^{\alpha B_{s_3} \dl} V) e^{i s_3 \Delta} Z(0) \big) \dd s_2 \dd s_3
\end{aligned}$$
Note that
$$
W_{e^{\alpha B_{s_2} \dl}|V|}(f, g) = W_{|V|}(e^{-\alpha B_{s_2} \dl} f, e^{-\alpha B_{s_2} \dl} g).
$$
The expression becomes
$$\begin{aligned}
\set E \int_{\substack{s_2>s_3}} W_{|V|} \big( V e^{-\alpha B_{s_2} \dl} e^{i s_2 \Delta} Z(0), \\
e^{-\alpha B_{s_2} \dl} e^{i (s_2-s_3) \Delta} (e^{\alpha B_{s_3} \dl} V) e^{i s_3 \Delta} Z(0) \big) \dd s_2 \dd s_3.
\end{aligned}$$

Next, we consider a generic term of the sum, for which we can proceed by recurrence (see Yajima \cite{yajima}). For a sesquilinear form $W$ of kernel $\widehat W(\xi_1, \xi_2)$, construct $L_V(W)$ by
$$
L_V(W)(f, g) = \set E \int_0^{\infty} W \big(V e^{-\alpha B_t \dl} e^{it\Delta} f, e^{-\alpha B_t \dl} e^{it\Delta} g \big) \dd t.
$$
Our previous construction of $W_{|V|}$ also fits this definition, if we take the preexisting form to be $\langle f, g \rangle$.

Clearly,
$$\begin{aligned}
L_V(W)(f, g) &= \set E \int_0^{\infty} \bigg(\int \widehat W(\xi_1+\eta, \xi_2) \widehat V(\eta) e^{-i\alpha B_t \xi_1 + it|\xi_1|^2} \widehat f(\xi_1) \\
&e^{i\alpha B_t \xi_2 - it|\xi_2|^2} \widehat {\ov g}(\xi_2)  \dd \eta \dd \xi_1 \dd \xi_2 \bigg) \dd t \\
&= \int_0^{\infty} \bigg(\int \widehat W(\xi_1+\eta, \xi_2) \widehat V(\eta) e^{-t \alpha^2 |\xi_1 - \xi_2|^2} e^{it|\xi_1|^2} \widehat f(\xi_1) \\
&e^{-it|\xi_2|^2} \widehat {\ov g}(\xi_2)  \dd \eta \dd \xi_1 \dd \xi_2 \bigg) \dd t
\end{aligned}$$
and then
\be\lb{2.18}\begin{aligned}
\widehat {L_V(W)}(\xi_1, \xi_2) &= \int \frac {\widehat W(\xi_1+\eta, \xi_2) \widehat V(\eta)}{\alpha^2 |\xi_1-\xi_2|^2 - i (|\xi_1|^2 - |\xi_2|^2)} \dd \eta.
\end{aligned}\ee
We retrieve exactly (\ref{2.14}) by plugging $\delta_{\xi_1 = \xi_2}$ into (\ref{2.18}).

Using this operator, we rewrite (\ref{2.12}) as
$$
\set E \int L_V(W_{|V|}) \big( e^{-\alpha B_{s_3} \dl} e^{i s_3 \Delta} Z(0), V e^{-\alpha B_{s_3} \dl} e^{i s_3 \Delta} Z(0) \big) \dd s_3.
$$

Finally, note that the expression (\ref{2.12}) can be entirely constructed by means of $L_V$ and one more operator, in which the inner multiplication of the sesquilinear form $W$ by $V$ is on the right instead of on the left:
$$
R_V(W)(f, g) = \set E \int_0^{\infty} W \big(e^{-\alpha B_t \dl} e^{it\Delta} f, V e^{-\alpha B_t \dl} e^{it\Delta} g \big) \dd t.
$$
On the Fourier side, this translates into
$$
\widehat {R_V(W)}(\xi_1, \xi_2) = \int \frac {\widehat W(\xi_1, \xi_2 - \eta) \widehat {\ov V}(\eta)}{\alpha^2 |\xi_1-\xi_2|^2 - i (|\xi_1|^2 - |\xi_2|^2)} \dd \eta.
$$
In particular, (\ref{2.12}) is given by
$$
(\ref{2.12}) = R_VL_V(W)(Z(0), Z(0)).
$$
As noted above, given that the identity $I := I(f, g) = \langle f, g \rangle$ has the symbol $\delta_{\xi_1 = \xi_2}$, for a real potential $|V|$ $W_{|V|}=R_{|V|}I = L_{|V}I$.

The expression for the other term of (\ref{2.11}), corresponding to $s_2<s_3$, is analogous, but the $L_V$ and $R_V$ operators appear in reverse order. Thus
$$
(\ref{2.11}) = R_VL_V(W_{|V|})(Z(0), Z(0)) + L_VR_V(W_{|V|})(Z(0), Z(0)).
$$

This construction easily generalizes to every term in the expansion (\ref{2.2}), except for the first one, which corresponds to the free evolution. Namely, the $n$-th term is dominated by the sum over all words of length $2n-4$ in which there are $n-2$ each of $R_V$ and $L_V$ in arbitrary order, applied to $W_{|V|}$. Each such suitable word corresponds to one possible ordering of the integration variables. This makes for exactly $\begin{pmatrix} 2n-4 \\ n-2 \end{pmatrix} < 2^{2n}$ summands.

Next, we consider the kernel of a generic summand obtained by successively applying the operators $L_{V_k}$ and $R_{V_k}$ in an arbitrary order $n$ times. It has the explicit form, which can be verified recursively,
$$
K(f, g) = \int_{(\set R^3)^n} \frac {\widehat f(\xi_0) \ov{\widehat g(\xi_n)} \prod_{k=1}^n \widehat {V_k}(\xi_{k} - \xi_{k-1}) \dd \xi_1 \ldots \dd \xi_n} {\prod_{k=1}^n \big(\alpha^2 |\xi_{a_{k}} - \xi_{b_k}|^2 - i(|\xi_{a_k}|^2 - |\xi_{b_k}|^2) \big)}.
$$
Here $[a_1, b_1] \subset [a_2, b_2] \ldots \subset [a_n, b_n] = [0, n]$ and the length increases by one at each step. Each $V_k$ can be either $V$, $\ov V$, or $|V|$.

This is entirely analogous to, but more general than, Yajima's formula, in which one always has $a_k = 0$, $b_k = k$.

However, let
$$
\sigma_k = \xi_{a_{k}} - \xi_{b_k}.
$$
After reindexing $V_k$ we obtain
$$
K(f, g) = \int_{(\set R^3)^n} \frac {\widehat f(\xi) \ov{\widehat g(\xi-\sigma_n)} \prod_{k=1}^n \widehat V_k(\sigma_k - \sigma_{k-1}) \dd \sigma_1 \ldots \dd \sigma_n} {\prod_{k=1}^n \big(\alpha^2 |\sigma_k|^2 -i|\sigma_k|^2 + 2i\sigma_k \xi_{a_k} \big)}.
$$
Since we are only interested in the $L^2$ norm, we can safely ignore the more complicated aspects of this approach. It suffices to show that, uniformly over $\xi_{a_k}$,
$$
\int_{\set R^3} \bigg| \int_{(\set R^3)^{n-1}} \frac {\prod_{k=1}^n \widehat V_k(\sigma_k - \sigma_{k-1}) \dd \sigma_1 \ldots \dd \sigma_{n-1}} {\prod_{k=1}^n \big(\alpha^2 |\sigma_k|^2 -i|\sigma_k|^2 + 2i\sigma_k \xi_{a_k} \big)}\bigg| \dd \sigma_n < \infty.
$$
Evaluating the expression in absolute value, the conclusion will follow if
$$
\int_{(\set R^3)^{n}} \frac {\prod_{k=1}^n |\widehat V_k(\sigma_k - \sigma_{k-1})| \dd \sigma_1 \ldots \dd \sigma_n} {\prod_{k=1}^n \alpha^2 |\sigma_k|^2} < \infty.
$$
This expression is a combination of convolutions and multiplications and admits a bound of
$$
\|K(f, g)\| \leq C^n \alpha^{-2n} \|V\|_{L^{3/2, 1}}^n \|f\|_2 \|g\|_2.
$$

Each term in the expansion (\ref{2.2}) is the sum of at most $2^{2n}$ such expressions. Consequently, the whole sum is bounded by
$$
\set E \||V|(x-B_t)^{1/2} Z\|_{L^2_{t, x}}^2 \leq \sum_{n \geq 1} (4C)^n \alpha^{-4n} \|V\|_{L^{3/2, 1}}^{2n} \|Z(0)\|_2^2.
$$
This converges for $\alpha$ sufficiently large relative to the norm of $V$: $\alpha \geq~C \|V\|_{L^{3/2, 1}}^{1/2}$.

In this case, the Strichartz norm of solutions corresponding to a random potential and fixed initial data will almost always be finite. Furthermore, the contribution of the potential will be, on average, in the order of $C\alpha^{-2}$:
$$
\set E \|Z - e^{it\Delta} Z(0)\|_{L^{\infty}_t L^2_x \cap L^2_t L^6_x} \leq C \alpha^{-2} \|V\|_{L^{3/2, 1}} \|Z(0)\|_2.
$$
This is obtained by starting with the second term in the summation.
\end{proof}

In particular, endpoint Strichartz estimates ensure the existence of the $L^2$ wave operators, for fixed initial data, in the following sense:
\begin{corollary}
Consider a solution $Z$ of (\ref{1.1}), with fixed initial data $Z_0$ and a potential $V \in L^{3/2, 1}$ in a state of Brownian motion:
$$
i \partial_t Z + H(t) Z = 0,\ Z(0) = Z_0,\ H(t) = -\Delta + V(x-\alpha B_t)
$$
Then for any $\alpha \geq \alpha_0$ the following limit
$$
W(Z_0) = \lim_{t \to \infty} e^{-it\Delta} Z(t)
$$
exists almost surely and, for fixed $Z_0$, 
$$
\set E \|W(Z_0) - Z_0\| \leq C \alpha^{-2} \|V\|_{L^{3/2, 1}} \|Z_0\|_2.
$$
\end{corollary}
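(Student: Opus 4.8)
The plan is to obtain both assertions directly from Theorem \ref{thm21}, using the inhomogeneous endpoint Strichartz estimate only to handle the tail of the Duhamel integral. Conjugating the Duhamel formula (\ref{2.2}) by $e^{-it\Delta}$ gives, in the interaction picture,
\[
e^{-it\Delta} Z(t) = Z_0 + \int_0^t e^{-is\Delta} V(x-\alpha B_s) Z(s) \dd s,
\]
so $W(Z_0)$ exists precisely when the integral converges as $t \to \infty$, and for $t_2 > t_1$ the Cauchy difference is $e^{-it_2\Delta}Z(t_2) - e^{-it_1\Delta}Z(t_1) = \int_{t_1}^{t_2} e^{-is\Delta} V(x-\alpha B_s) Z(s) \dd s$.

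For the almost sure existence I would estimate this tail by the dual of the endpoint Strichartz estimate of Keel--Tao \cite{tao}, namely $\|\int e^{-is\Delta} G(s) \dd s\|_{L^2_x} \les \|G\|_{L^2_s L^{6/5, 2}_x}$, combined with H\"older in Lorentz spaces, $\|V(x-\alpha B_s) Z\|_{L^{6/5,2}_x} \les \|V\|_{L^{3/2,1}} \|Z\|_{L^{6,2}_x}$. This yields
\[
\Big\|\int_{t_1}^{t_2} e^{-is\Delta} V(x-\alpha B_s) Z(s) \dd s\Big\|_{L^2_x} \les \|V\|_{L^{3/2,1}} \|Z\|_{L^2_s([t_1, t_2]) L^{6, 2}_x}.
\]
By Theorem \ref{thm21} together with the free estimate $\|e^{it\Delta} Z_0\|_{L^2_t L^{6,2}_x} \les \|Z_0\|_2$, one has $\set E \|Z\|_{L^2_t L^{6,2}_x} < \infty$ for $\alpha \geq \alpha_0$, hence $\|Z\|_{L^2_{[0,\infty)} L^{6,2}_x} < \infty$ almost surely; the right-hand side above therefore tends to $0$ as $t_1, t_2 \to \infty$, the Cauchy criterion is met, and $W(Z_0)$ exists almost surely in $L^2$.

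For the quantitative bound I would sidestep re-expanding the integral and instead exploit the unitarity of $e^{-it\Delta}$: since
\[
\|W(Z_0) - Z_0\|_{L^2} = \lim_{t \to \infty} \|e^{-it\Delta}\big(Z(t) - e^{it\Delta} Z_0\big)\|_{L^2} = \lim_{t \to \infty} \|Z(t) - e^{it\Delta} Z_0\|_{L^2},
\]
we have $\|W(Z_0) - Z_0\|_{L^2} \leq \|Z - e^{it\Delta} Z_0\|_{L^{\infty}_t L^2_x}$. Taking expectations and invoking the bound of Theorem \ref{thm21} gives exactly $\set E \|W(Z_0) - Z_0\| \leq C \alpha^{-2} \|V\|_{L^{3/2,1}} \|Z_0\|_2$, with the correct power of $\|V\|$; estimating $W(Z_0) - Z_0$ head-on through the tail integral would instead cost an extra factor $\|V\|^{1/2}_{L^{3/2,1}}$, so the unitarity route is preferable.

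The only real content is the existence step, and even there the work has already been done: the finiteness of $\set E \|Z\|_{L^2_t L^{6,2}_x}$, equivalently the almost sure finiteness of the Strichartz norm, is supplied by the convergent Neumann series in the proof of Theorem \ref{thm21}, valid precisely in the regime $\alpha \geq \alpha_0$. Everything else is the standard reduction of wave-operator existence to convergence of the Duhamel tail in $L^2$, so I do not anticipate a genuine obstacle beyond bookkeeping with the Lorentz-space H\"older inequalities.
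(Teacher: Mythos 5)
Your argument is correct and is essentially the paper's own: the paper's proof of this corollary is the single sentence that the result follows by ``applying the endpoint Strichartz estimates to equation (\ref{1.1}),'' and your interaction-picture Duhamel tail estimate (via the dual Keel--Tao bound and Lorentz-space H\"older) together with the unitarity reduction of $\|W(Z_0)-Z_0\|$ to the $L^\infty_t L^2_x$ norm controlled by Theorem \ref{thm21} is exactly the standard way to fill in that one-line proof. There is no discrepancy to report.
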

\begin{proof} This result is obtained by applying the endpoint Strichartz estimates to equation (\ref{1.1}).
\end{proof}

\appendix
\section{Brownian motion}
As a reminder, the standard construction of three-dimensional Brownian motion is as follows. There exist a probability space $(\Omega, \set P(\omega))$ and random variables $B_t: \Omega \to \set R^3$, $t \geq 0$, such that
\begin{enumerate}
\item[i)] For every $0<t<s$, $B_t-B_s$ and $B_s$ are independent random variables.
\item[ii)] For every $t>0$, $B_t$ has a normal distribution with mean $0$ and variance $t$, $N(0, t)$, of probability density
\be
p(B_t = y) = (2\pi t)^{-3/2} e^{-|y|^2/2t}.
\ee
\item[iii)] Almost surely $B_t$ depends continuously on $t$.
\end{enumerate}
Such a family of random variables $B_t$ is called a \emph{Brownian motion} and is uniquely characterized, up to a measure-preserving transformation, by these three properties.

Brownian motion obeys the scaling $\alpha B_{t} = \tilde B_{\alpha^2 t}$, where $\tilde B_t$ is a distinct Brownian motion.

Locally in time, Brownian motion belongs almost surely to the H\"{o}lder spaces $\Lambda^s$, $s<1/2$, and to the Besov spaces $B^{1/2}_{p, \infty}$, $1\leq p<\infty$.

\section*{Acknowledgments}
M.B.\ is partially supported by a Rutgers Research Council grant.

A.S.\ is partially supported by the NSF grant DMS--0903651.


\end{document}